\documentclass[reqno,centertags, 12pt]{amsart}
\usepackage{amsmath,amsthm,amscd,amssymb}
\usepackage{latexsym}
\sloppy

\newcommand{\bbC}{{\mathbb{C}}}
\newcommand{\bbD}{{\mathbb{D}}}

\newcommand{\bbR}{{\mathbb{R}}}

\newcommand{\fre}{{\frak{e}}}

\newcommand{\calI}{{\mathcal I}}


\newcommand{\no}{\nonumber}
\newcommand{\lb}{\label}
\newcommand{\f}{\frac}

\newcommand{\ti}{\tilde  }

\newcommand{\dist}{\text{\rm{dist}}}
\newcommand{\intt}{\text{\rm{int}}}

\newcommand{\s}{\text{\rm{s}}}

\newcommand{\pp}{\text{\rm{pp}}}
\newcommand{\supp}{\text{\rm{supp}}}

\newcommand{\bi}{\bibitem}

\newcommand{\beq}{\begin{equation}}
\newcommand{\eeq}{\end{equation}}
\newcommand{\ba}{\begin{align}}
\newcommand{\ea}{\end{align}}
\newcommand{\veps}{\varepsilon}




%
%
\newcounter{smalllist}
\newenvironment{SL}{\begin{list}{{\rm\roman{smalllist})}}{%
\setlength{\topsep}{0mm}\setlength{\parsep}{0mm}\setlength{\itemsep}{0mm}%
\setlength{\labelwidth}{2em}\setlength{\leftmargin}{2em}\usecounter{smalllist}%
}}{\end{list}}

%
%



\DeclareMathOperator{\Real}{Re}
\DeclareMathOperator{\Ima}{Im}
\DeclareMathOperator{\diam}{diam}

\allowdisplaybreaks
\numberwithin{equation}{section}

\newtheorem{theorem}{Theorem}[section]

\newtheorem*{p2.1}{Proposition 2.1}
\newtheorem{proposition}[theorem]{Proposition}
\newtheorem{lemma}[theorem]{Lemma}
\newtheorem{corollary}[theorem]{Corollary}
\theoremstyle{definition}

\theoremstyle{remark}
\newtheorem*{remark}{Remark}
\newtheorem*{remarks}{Remarks}
\newtheorem*{definition}{Definition}

\newcommand{\abs}[1]{\lvert#1\rvert}

\newcommand{\norm}[1]{\lVert#1\rVert}

\begin{document}
\title{The Hilbert Transform of a Measure}
\author[A.~Poltoratski, B.~Simon, and M.~Zinchenko]
{Alexei Poltoratski$^{1,2}$, Barry Simon$^{3,4}$, and
Maxim Zinchenko$^3$}

\thanks{$^1$ Mathematics Department, Texas A\&M University, College Station, TX 77843, USA. E-mail:
alexei@math.tamu.edu}

\thanks{$^2$ Supported in part by NSF grant DMS-0800300}

\thanks{$^3$ Mathematics 253-37, California Institute of Technology, Pasadena, CA 91125, USA.
E-mail: bsimon@caltech.edu; maxim@caltech.edu}

\thanks{$^4$ Supported in part by NSF grant DMS-0652919}

\date{May 29, 2009}
\keywords{Hilbert transform, homogeneous set, weak $L^1$}
\subjclass[2000]{42A50, 26A30, 42B25}

\begin{abstract}
Let $\fre$ be a homogeneous subset of $\bbR$ in the sense of Carleson. Let
$\mu$ be a finite positive measure on $\bbR$ and $H_\mu(x)$ its Hilbert
transform. We prove that if $\lim_{t\to\infty} t \abs{\fre\cap\{x\mid
\abs{H_\mu(x)}
>t\}}=0$, then $\mu_s(\fre)=0$, where $\mu_\s$ is the singular part
of $\mu$.
\end{abstract}

\maketitle

\section{Introduction} \lb{s1}

This is a paper about the Hilbert transform of a measure defined as follows. The Stieltjes transform
(also called Borel transform or Markov function) of a finite (positive) measure, $\mu$, is defined on
$\bbC_+ =\{z\mid\Ima z>0\}$ by
\begin{equation} \lb{1.1}
F_\mu(z)=\int \f{d\mu(x)}{x-z}
\end{equation}
For Lebesgue a.e.\ $x\in\bbR$,
\begin{equation} \lb{1.2}
F_\mu(x+i0) =\lim_{\veps\downarrow 0}\, F_\mu(x+i\veps)
\end{equation}
exists. The Hilbert transform is given by
\begin{equation} \lb{1.3}
H_\mu(x)=\f{1}{\pi}\, \Real F_\mu(x+i0)
\end{equation}

It is a result of Loomis \cite{Loo} that for a universal constant, $C$, ($\norm{\mu}\equiv\mu(\bbR)$)
\begin{equation} \lb{1.4}
\abs{\{x\mid \abs{H_\mu(x)}\geq t\}} \leq \f{C \norm{\mu}}{t}
\end{equation}
This was earlier proven for the a.c.\ case by Kolmogorov (attributed by Zygmund
\cite{Zyg}) and, for finite point measures, Boole \cite{Boole} proved (and
Loomis rediscovered)
\begin{equation} \lb{1.5}
\abs{\{x\mid \pm H_\mu(x) \geq t\}} = \f{\norm{\mu}}{\pi t}
\end{equation}
We note that \eqref{1.5} was extended by Hru{\v{s}}{\v{c}}{\"e}v--Vinogradev \cite{HV} to all singular
measures; see also \cite{S250,Pol}.

\begin{remark}
We do not need an explicit value of $C$ in \eqref{1.4}. Davis
\cite{Davis,Da74} has shown the optimal constant in \eqref{1.4} is
$C=1$.
\end{remark}

In distinction, for a.c. measures, $d\mu=f\,dx$, we have
\begin{equation} \lb{1.6}
\lim_{t\to\infty}\, t \abs{\{x\mid \abs{H_{f\,dx}(x)}\geq t\}} =0
\end{equation}
This follows from the fact that if $f\in L^2$, $H_{f\,dx}\in L^2$ (indeed,
$\norm{H_{f\,dx}}_2 = \norm{f}_2$), that $L^2\cap L^1$ is dense in $L^1$, that
\eqref{1.6} is trivial if $H_{f\,dx}$ is $L^2$, and that for any $\theta\in
[0,1]$,
\begin{equation} \lb{1.7}
\begin{split}
\abs{\{x\mid  \abs{f(x) & +g(x)} > t\}} \\
&\leq \abs{\{x\mid \abs{f(x)} >\theta t\}} + \abs{\{x\mid \abs{g(x)} > (1-\theta)t\}}
\end{split}
\end{equation}
From \eqref{1.5}, \eqref{1.6}, and \eqref{1.7}, one sees
\begin{equation} \lb{1.8}
\lim_{t\to +\infty}\, \pi t \abs{\{x\mid \pm H_\mu(x)\geq t\}} = \norm{\mu_\s}
\end{equation}
where
\begin{equation} \lb{1.9}
d\mu = f\, dx + d\mu_\s
\end{equation}
is the Lebesgue decomposition of $\mu$ (i.e., $\mu_\s$ is singular). We also note that \eqref{1.8} is a special case of Poltoratski's formula \eqref{5.4}.

One can rephrase this. We recall that weak-$L^1$ is defined by (this is not a norm!) setting
\begin{equation} \lb{1.10}
\norm{f}_{1,w}\equiv \sup_t\, t \abs{\{x\mid \abs{f(x)}\geq t\}}
\end{equation}
and
\begin{equation} \lb{1.11}
L_w^1 = \{f\mid\norm{f}_{1,w}<\infty\}
\end{equation}
so \eqref{1.4} says $H_\mu\in L_w^1$. We also define
\begin{equation} \lb{1.12}
L_{w;0}^1 = \bigl\{f\in L_w^1\mid \lim_{t\to\infty}\, t \abs{\{x\mid\abs{f(x)}\geq t\}}=0\bigr\}
\end{equation}
and \eqref{1.8} implies
\begin{equation} \lb{1.13}
H_\mu\in L_{w;0}^1 \Leftrightarrow \mu_s(\bbR)=0
\end{equation}

Our main goal is to provide a local version of this theorem for special sets singled out by Carleson \cite{Car}.

\begin{definition}
We say that a compact set $\fre\subset\bbR$ is {\it homogeneous} (with
homogeneity constant $\delta$) if there is $\delta >0$, such that
for all $x\in\fre$ and $0<a<\diam(\fre)$,
\begin{equation} \lb{1.14}
\abs{\fre\cap(x-a,x+a)} \geq 2\delta a
\end{equation}
\end{definition}

Given a function, $f$, we use $f\restriction\fre$ to denote the function $f\chi_\fre$ with $\chi_\fre$ the
characteristic function of $\fre$. The purpose of this paper is to prove

\begin{theorem}\lb{T1.1} Let $\fre$ be homogeneous and let $\mu$ be a measure on $\bbR$ so that $H_\mu\restriction
\fre\in L_{w;0}^1$. Then
\begin{equation} \lb{1.15}
\mu_\s(\fre)=0
\end{equation}
\end{theorem}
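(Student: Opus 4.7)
We prove the contrapositive: assuming $\mu_s(\fre)>0$, we show $\limsup_{t\to\infty} t|\fre\cap\{|H_\mu|>t\}|>0$. The argument has the following structure.

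\textbf{Step 1 (Reduction to singular $\mu$).} Lebesgue-decompose $d\mu=f\,dx+d\mu_s$. By \eqref{1.6}, $H_{f\,dx}\in L^1_{w;0}(\bbR)$, and so its restriction to $\fre$ lies in $L^1_{w;0}$. The sub-additivity \eqref{1.7} (applied with $\theta=1/2$) then shows that $H_\mu\restriction\fre\in L^1_{w;0}$ if and only if $H_{\mu_s}\restriction\fre\in L^1_{w;0}$. Thus we may take $\mu=\mu_s$ singular.

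\textbf{Step 2 (Localization).} Decompose $\mu=\mu|_{N_1(\fre)}+\mu|_{\bbR\setminus N_1(\fre)}$, where $N_1(\fre)=\{x:\dist(x,\fre)<1\}$. For $x\in\fre$ we have $|H_{\mu|_{\bbR\setminus N_1(\fre)}}(x)|\leq \|\mu\|/\pi$, so this summand contributes a bounded, hence $L^1_{w;0}$, function on $\fre$. We may therefore assume $\mu$ is singular with $\supp\mu\subset N_1(\fre)$ and $\mu(\fre)>0$.

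\textbf{Step 3 (Homogeneity gives a density estimate).} A Vitali covering of $N_r(\fre)=\bigcup_{x\in\fre}(x-r,x+r)$ combined with \eqref{1.14} yields
\[
|\fre\cap N_r(\fre)|\geq \tfrac{\delta}{3}\,|N_r(\fre)|,\qquad 0<r<\diam(\fre),
\]
so $\fre$ occupies a definite fraction of any of its small neighborhoods.

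\textbf{Step 4 (Poltoratski's formula transfers the bound).} Poltoratski's formula \eqref{5.4}, whose special case $A=\bbR$ is \eqref{1.8}, provides asymptotic information of the form $\pi t|A\cap\{|H_\mu|>t\}|\to 2\mu_s(A)$ for Borel $A$ subject to an appropriate boundary condition. Applying \eqref{5.4} to open neighborhoods $A=N_r(\fre)$ (where the boundary condition is easy to verify), and combining with the density estimate of Step 3 together with the fact that for singular $\mu$ supported on $N_1(\fre)$ the set $\{|H_\mu|>t\}$ is concentrated near $\supp\mu$ for large $t$, we aim to propagate the bound to
\[
\liminf_{t\to\infty} \pi t\,|\fre\cap\{|H_\mu|>t\}|\ \geq\ c(\delta)\mu_s(\fre)\ >\ 0,
\]
contradicting the hypothesis.

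\textbf{Main obstacle.} The principal difficulty is Step 4. When $\fre$ is Cantor-like with empty topological interior, $\partial\fre=\fre$ carries all the $\mu_s$-mass, so \eqref{5.4} does not apply to $A=\fre$ directly. Homogeneity is exactly the geometric regularity needed to transfer the Poltoratski asymptotic from neighborhoods $N_r(\fre)$ back to $\fre$ while losing at most a factor $c(\delta)>0$. A subsidiary concern is the potential cancellation in $H_\mu$ between the contributions of $\mu|_\fre$ and $\mu|_{N_1(\fre)\setminus\fre}$; this is bypassed because \eqref{5.4} is an unconditional statement for all singular $\mu$, so cancellation is automatically tracked by the formula.
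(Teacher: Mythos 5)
Your Steps 1 and 2 are essentially the paper's own reductions (Propositions~\ref{P2.1}--\ref{P2.3}), and the density estimate in Step~3 is correct (it is a standard Vitali argument). But Step~4 is where the entire substance of the theorem lives, and you have not given an argument for it; you have stated what you ``aim to propagate'' and flagged it as the main obstacle. That obstacle is real and is not resolved by the ingredients you assemble.

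The specific gap: knowing that $t\,|N_r(\fre)\cap\{|H_\mu|>t\}|$ stays bounded below (which Poltoratski's formula on neighborhoods gives) together with $|\fre|\geq\tfrac{\delta}{3}|N_r(\fre)|$ does \emph{not} force $t\,|\fre\cap\{|H_\mu|>t\}|$ to stay bounded below. The two sets $\fre$ and $\{|H_\mu|>t\}$ could a priori live on complementary parts of $N_r(\fre)$: the large-$H_\mu$ set tends to concentrate in the gaps of $\fre$ near the support of $\mu_s$, which is exactly where $\fre$ need not be. Indeed, Theorem~\ref{T1.6} of the paper exhibits a weakly homogeneous $\fre$ and a singular $\mu$ (the Cantor measure) where $H_\mu\restriction\fre\in L^1_{w;0}$ while $\mu_s(\fre)=1$; there the level sets of $H_\mu$ fall in the complementary intervals. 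So any correct transfer argument must use homogeneity at a finer scale than a global density bound --- it has to show that $|F_\mu|$ remains a definite fraction of $t$ on a $\delta$-fraction of the intervals \emph{adjacent} to each component of $\{|F_\mu|>t\}$, which is precisely where $\fre$-mass is guaranteed by \eqref{1.14}.

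The paper's mechanism for this is the content you are missing. After reducing to a singular measure supported on a closed null subset $A\subset\fre$, one observes that $\{x\mid|F_\mu(x+i0)|>t\}\setminus A$ is a countable union of intervals $I=[c-a,c+a]$ each with an endpoint in $A\subset\fre$. Proposition~\ref{P3.2} shows by a square-root/Poisson argument that $|F_\mu(c+a+2ia)|\geq t/8\pi^2$. Lemma~\ref{L3.3} and Proposition~\ref{P3.4} then use the rank-one-perturbation function $F_{t_0}=F/(1+F/t_0)$, for which $\chi_{\Gamma_{t_0}}(x)=1-\tfrac{1}{\pi}\arg(F_{t_0}(x+i0)-\tfrac{t_0}{2})$ is the boundary value of a bounded harmonic function, to deduce that the touching interval $\tilde I$ has $|\tilde I\setminus\Gamma_{t_0}|\leq\tfrac{\delta}{2}|I|$ with $t_0=\delta t/128\pi^2$. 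Homogeneity at the endpoint $x_0\in\fre\cap I$ then gives $|\Gamma_{t_0}\cap\fre\cap I^\sharp|\geq\tfrac{\delta}{2}|I|$, and a Vitali covering over the $I^\sharp$'s yields \eqref{3.2}, hence \eqref{1.17}. Without something of this nature --- a pointwise-in-$t$, interval-by-interval comparison of $\Gamma_{t_0}\cap\fre$ with $\Gamma_t$ --- the reduction steps alone do not prove the theorem.
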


\begin{remarks} 1. There is an analog for measures on $\partial\bbD=\{z\in\bbC\mid \abs{z}=1\}$.

\smallskip
2. The Hilbert transform can be defined if $\mu$, rather than being finite, obeys $\int (1+\abs{x})^{-1}\,
d\mu <\infty$. Indeed, $H_\mu$ can be defined up to an additive constant if $\int (1+\abs{x}^2)^{-1}\,
d\mu(x)<\infty$. Theorem~\ref{T1.1} extends to both these cases.

\smallskip
3. It follows from the arguments in Section~\ref{s2} that a converse to Theorem~\ref{T1.1} holds and
that $H_\mu\restriction\fre\in L_{w;0}^1$ if and only if $H_{\mu\restriction\fre}\in L_{w;0}^1$. Thus,
we have a three-fold equivalence,
\begin{equation} \lb{1.15a}
H_\mu\restriction\fre\in L_{w;0}^1 \Leftrightarrow H_{\mu\restriction\fre}\in L_{w;0}^1
\Leftrightarrow \mu_\s(\fre) =0
\end{equation}
\end{remarks}

There is a special case that is both important and one motivation for this work. We recall \cite{NVY}:

\begin{definition} A finite measure $\mu$ on $\bbR$ is called {\it reflectionless\/} on $\fre\subset\bbR$,
where $\fre$ is compact and of strictly positive Lebesgue measure, if and only if $H_\mu\restriction\fre=0$.
\end{definition}

There has been an explosion of recent interest about reflectionless measures due to work of
Remling \cite{Remppt}. Clearly, the zero function lies in $L_{w;0}^1$, so

\begin{corollary}\lb{C1.2} Let $\fre$ be homogeneous; let $\mu$ be a measure on $\bbR$ which is reflectionless
on $\fre$. Then \eqref{1.15} holds.
\end{corollary}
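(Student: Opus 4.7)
The corollary is an immediate consequence of Theorem~\ref{T1.1}, and the plan is essentially to verify its hypothesis. By the definition of reflectionless, $H_\mu\restriction\fre$ is the zero function (interpreted a.e., since $H_\mu$ itself is defined by the a.e.\ boundary value \eqref{1.2}--\eqref{1.3}). Thus $(H_\mu\restriction\fre)(x)=0$ for a.e.\ $x\in\bbR$, and for every $t>0$ the level set
\[
\{x\mid \abs{(H_\mu\restriction\fre)(x)}\geq t\}
\]
has Lebesgue measure zero. Consequently $t\abs{\{x\mid \abs{(H_\mu\restriction\fre)(x)}\geq t\}}=0$ for all $t>0$, so this expression trivially tends to $0$ as $t\to\infty$. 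Hence $H_\mu\restriction\fre\in L_{w;0}^1$ with $\norm{H_\mu\restriction\fre}_{1,w}=0$.

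With the hypothesis of Theorem~\ref{T1.1} verified, the conclusion $\mu_\s(\fre)=0$ follows directly. There is no genuine obstacle in the corollary itself; all of the content lies in Theorem~\ref{T1.1}, and what remains here is only the observation that the identically zero function belongs to $L_{w;0}^1$. For clarity one may also remark that the hypothesis of the theorem only requires the one-sided condition $\lim_{t\to\infty}t\abs{\fre\cap\{\abs{H_\mu}>t\}}=0$, which is obviously satisfied when $H_\mu$ vanishes on $\fre$, so no subtlety about null sets or measurability on $\fre$ arises.
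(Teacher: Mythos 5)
Your proof is correct and is exactly the paper's argument: the paper simply notes that the zero function lies in $L_{w;0}^1$ and invokes Theorem~\ref{T1.1}. Nothing further is needed.
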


This result is not new. For cases where $\supp(\mu)\subset\fre$, it is due to Sodin--Yuditskii \cite{SY}, with
some extensions due to Gesztesy--Zinchenko \cite{GZ}. Recently, Poltoratski--Remling \cite{PR} have proven a
stronger result than Corollary~\ref{C1.2}---instead of requiring that $\fre$ is homogeneous, they only need for
all $x_0\in\fre$ that
\begin{equation} \lb{1.16}
\limsup_{a\downarrow 0}\, \f{\abs{\fre\cap(x_0-a,x_0+a)}}{2a} >0
\end{equation}
If \eqref{1.16} holds for all $x_0\in\fre$, we call $\fre$ {\it weakly homogeneous}, following \cite{PR}.

The property of being reflectionless is not robust in that changing $\mu$ off $\fre$ will usually destroy the
reflectionless property. As we will see in Section~\ref{s2}, having $H_\mu\restriction\fre$ in $L_{w;0}^1$ is
robust and explains one reason we sought this result.

Our proof is quite different from \cite{PR}. We note, however, that
our proof, like the one in \cite{PR}, is essentially a real variable
proof (we go into the complex plane but use no contour integrals),
while the earlier work of \cite{SY,GZ} is a complex variable
argument.

We mention that Corollary~\ref{C1.2} (and so Theorem~\ref{T1.1}) does not hold
for arbitrary $\fre$. Nazarov--Volberg--Yuditkii \cite{NVY} have examples of
reflectionless measures on their supports where \eqref{1.16} fails and that
have a singular component.

We want to mention another special case of Theorem~\ref{T1.1}:

\begin{corollary} \lb{C1.3}
Let $\fre$ be a homogeneous set in $\bbR$. Let $\mu$ be a measure on $\bbR$ so
that there is a set $A$ with
\begin{SL}
\item[{\rm{(i)}}] $\abs{A}=0$
\item[{\rm{(ii)}}] $\mu(\bbR\setminus A)=0$
\item[{\rm{(iii)}}] $A$ is closed and $A\subset\fre$
\end{SL}
Suppose $H_\mu\restriction\fre\in L_{w;0}^1$. Then $\mu=0$.
\end{corollary}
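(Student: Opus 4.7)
The plan is to derive Corollary~\ref{C1.3} as an almost immediate bookkeeping consequence of Theorem~\ref{T1.1}. The key observation is that conditions (i)--(iii) together force $\mu$ both to coincide with its own singular part and to put all of its mass inside $\fre$, after which Theorem~\ref{T1.1} finishes the argument.

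First I would combine (i) and (ii): since $\mu$ is a finite positive measure concentrated on the Lebesgue null set $A$, the measure $\mu$ is mutually singular with Lebesgue measure. Hence the absolutely continuous piece in the Lebesgue decomposition \eqref{1.9} vanishes and $\mu = \mu_\s$. Next, because $\fre$ is homogeneous and $H_\mu\restriction\fre \in L_{w;0}^1$ is part of the hypothesis, Theorem~\ref{T1.1} applies directly and yields $\mu_\s(\fre) = 0$, i.e.\ $\mu(\fre) = 0$. Finally, (ii) gives $\mu(\bbR) = \mu(A)$ and (iii) supplies $A\subset\fre$, so
\[
\mu(\bbR) = \mu(A) \leq \mu(\fre) = \mu_\s(\fre) = 0,
\]
which gives $\mu = 0$ since $\mu$ is positive.

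Thus the argument is essentially a three-line deduction, with Theorem~\ref{T1.1} doing all of the real work; there is no independent obstacle here. I would note in passing that the closedness clause in (iii) plays no active role in the derivation---only the inclusion $A \subset \fre$ and the condition $\mu(\bbR\setminus A)=0$ are used. Its presence in the statement appears to be a structural convenience, perhaps to allow one to think of $A$ as an honest (closed) support for $\mu$ inside $\fre$.
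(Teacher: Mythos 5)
Your derivation is correct and is exactly the intended argument: since (i)--(ii) force $\mu=\mu_\s$ and (iii) puts all the mass in $\fre$, Theorem~\ref{T1.1} (applied with the hypothesis $H_\mu\restriction\fre\in L_{w;0}^1$) gives $\mu(\bbR)=\mu(A)\leq\mu_\s(\fre)=0$. Your side remark about the closedness of $A$ is also accurate: it is not needed to deduce Corollary~\ref{C1.3} from Theorem~\ref{T1.1}, but it is used substantively in the proof of the quantitative strengthening, Theorem~\ref{T1.4}, where one works with maximal open intervals in $\bbR\setminus A$.
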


We will need a strengthening of this special case:

\begin{theorem}\lb{T1.4}
Let $\fre$ be a homogeneous set in $\bbR$. There is a constant $C_1$
depending only on $\fre$ so that for any measure, $\mu$, obeying
{\rm{(i)--(iii)}} of Corollary~\ref{C1.3}, we have that
\begin{equation} \lb{1.17}
\mu(\fre) \leq C_1 \, \liminf_{t\to\infty}\, t\abs{\{x\in\fre\mid\abs{H_\mu(x)}\geq t\}}
\end{equation}
\end{theorem}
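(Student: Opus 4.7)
The plan is to prove Theorem~\ref{T1.4} first for finite atomic measures $\mu=\sum_{i=1}^n c_i\delta_{x_i}$ with $\{x_i\}\subset A\subset\fre$ using Boole's identity \eqref{1.5}, and then to extend by approximation. For such atomic $\mu$, $H_\mu$ is rational with $H_\mu'(x)=\pi^{-1}\sum_i c_i(x_i-x)^{-2}>0$, hence strictly increasing from $-\infty$ to $+\infty$ on each interval between consecutive poles. Thus, for all $t$ large enough (depending on $\mu$), $\{|H_\mu|\ge t\}$ consists of exactly $n$ disjoint open intervals $J_i(t)=(x_i-L_i,x_i+R_i)$, one about each pole, and Boole's identity yields $\sum_i L_i=\sum_i R_i=\norm{\mu}/(\pi t)$. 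Writing $H_\mu(x)=c_i[\pi(x_i-x)]^{-1}+g_i(x)$ with $g_i$ smooth near $x_i$, the equations $H_\mu(x_i-L_i)=t$ and $H_\mu(x_i+R_i)=-t$ give $L_i,R_i=c_i/(\pi t)+O(1/t^2)$, so $\sum_i\min(L_i,R_i)=(1-o(1))\norm{\mu}/(\pi t)$ as $t\to\infty$.

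Since $x_i\in\fre$, homogeneity at $x_i$ with radius $\min(L_i,R_i)$ gives $|\fre\cap J_i(t)|\ge 2\delta\min(L_i,R_i)$, so
\[
|\fre\cap\{|H_\mu|\ge t\}|\ge 2\delta\sum_i\min(L_i,R_i)=(1-o(1))\f{2\delta\norm{\mu}}{\pi t}.
\]
Hence $\liminf_{t\to\infty}t|\fre\cap\{|H_\mu|\ge t\}|\ge 2\delta\norm{\mu}/\pi$, establishing \eqref{1.17} for atomic $\mu$ with $C_1=\pi/(2\delta)$.

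For general $\mu$ satisfying (i)--(iii), one approximates by finite atomic measures $\mu_n=\sum_j\mu(B_j^{(n)})\delta_{a_j^{(n)}}$ corresponding to partitions $A=\bigsqcup_j B_j^{(n)}$ of vanishing mesh, with $a_j^{(n)}\in B_j^{(n)}\cap A$; then $\mu_n\to\mu$ weakly and $\norm{\mu_n}=\mu(\fre)$. Using \eqref{1.7} with $\theta\in(0,1)$,
\[
t|\fre\cap\{|H_{\mu_n}|>t\}|\le\f{1}{\theta}(\theta t)|\fre\cap\{|H_\mu|>\theta t\}|+t|\{|H_{\mu_n-\mu}|>(1-\theta)t\}|,
\]
applying the atomic case to $\mu_n$, taking $\liminf_t$, then $n\to\infty$, then $\theta\to 1^-$ yields \eqref{1.17} for $\mu$ provided the last term vanishes in the combined limit. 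The main obstacle is exactly this vanishing: since $\mu_n-\mu$ is mutually singular with both parts of comparable mass, its total variation norm does not tend to zero, so Loomis's bound \eqref{1.4} is inadequate. A more delicate analysis, exploiting the pointwise convergence $F_{\mu_n}(z)\to F_\mu(z)$ in $\bbC_+$ away from $A$ together with the concentration of $\{|H_\mu|\ge t\}$ near $A$ for large $t$, is required; alternatively, one may pursue a direct structural analysis of the components of $\{|H_\mu|\ge t\}$ for general singular $\mu$, showing each component is centered near a point of $A\subset\fre$ with approximately symmetric arms as $t\to\infty$, and then applying homogeneity as in the atomic case.
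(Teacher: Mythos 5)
Your computation for finite atomic measures is correct and cleanly gives \eqref{1.17} with the sharper constant $C_1=\pi/(2\delta)$, but it treats only a special case, and you have correctly diagnosed that the extension is where the real work lies---and left it undone. The weak approximation route cannot close the gap: for any discretization $\mu_n$ of a non-atomic singular $\mu$, the measures $\mu_n$ and $\mu$ are mutually singular, so $\norm{\mu_n-\mu}=2\norm{\mu}$ and Loomis's bound on $H_{\mu_n-\mu}$ contributes a term of order $\norm{\mu}/t$ that never vanishes; nor is $\liminf_t t\abs{\fre\cap\{\abs{H_\mu}\geq t\}}$ lower semicontinuous under weak convergence in any way you have established. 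The alternative you sketch---that each component of $\{\abs{H_\mu}\geq t\}$ sits symmetrically about a point of $A$---is exactly what makes the atomic case work, and it is false for singular continuous $\mu$. Since $F_\mu'>0$ on $\bbR\setminus A$, a maximal interval $I=(c-a,c+a)$ of $\{\abs{F_\mu}>t\}\setminus A$ has exactly one endpoint in $A$; there is no atom at that endpoint to force the interval on the other side to have comparable length. Applying homogeneity at that endpoint with radius $a$ produces a two-sided ball of which only one half lies in $I$, giving $\abs{\fre\cap I}\geq(2\delta-1)a$, which is vacuous for $\delta\leq\tfrac12$.

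That one-sidedness is precisely the obstruction that Section~\ref{s3} is built to overcome, and the mechanism is quite different from anything in your plan. Proposition~\ref{P3.2} shows, via the bounded-argument function $G=\sqrt{F_\mu/i}$ and the Poisson integral, that $\abs{F_\mu}$ stays of size $\gtrsim t$ at the point $c+a+2ia$ over $I$; Proposition~\ref{P3.4} then transfers this to the real line by applying the M\"obius transform $F\mapsto F/(1+F/t_0)$ and estimating the harmonic extension of $\chi_{\Gamma_{t_0}}$, concluding that the touching interval $\ti I=[c+a,c+3a]$ lies in $\Gamma_{t_0}$ up to a set of measure $\leq\tfrac{\delta}{2}\abs{I}$, where $t_0=\delta t/(128\pi^2)$. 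Only with both $I$ and most of $\ti I$ inside $\Gamma_{t_0}$ does the full homogeneity ball become useful, at the price of dropping the threshold from $t$ to $t_0$ and adding a Vitali covering step---which is where the weaker constant $1536\pi^3/\delta^2$ comes from. Your atomic argument is a correct and instructive warm-up, but the substance of Theorem~\ref{T1.4} is the non-atomic case, and that part of your proof is missing.
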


\begin{remarks} 1. In fact, $C_1$ is only $\delta$-dependent; explicitly, one can take
\begin{equation} \lb{1.18}
C_1 = \f{1536\pi^3}{\delta^2}
\end{equation}
We have made no attempt to optimize this constant and, indeed, have made choices to simplify the
arithmetic. The $\delta^{-2}$ may be optimal, and certainly it seems that $\delta^{-1}$ is not
possible.

\smallskip
2. There is also a strengthening of Theorem~\ref{T1.1} of this same form.
\end{remarks}

We can say more about weakly homogeneous sets, that is, ones that obey \eqref{1.16}, and thereby
illuminate and limit Theorem~\ref{T1.1}.

\begin{theorem} \lb{T1.5} Let $\fre$ be a compact weakly homogeneous set and $\mu$ a measure
on $\bbR$ so that $H_\mu\restriction\fre\in L_{w;0}^1$. Then for all $x_0\in\fre$,
\begin{equation}\lb{1.19}
\mu(\{x_0\}) =0
\end{equation}
that is, $\mu$ has no pure points in $\fre$.
\end{theorem}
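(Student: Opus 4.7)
The plan is to assume, for contradiction, that $\mu(\{x_0\})=c>0$ and isolate the resulting pole of $H_\mu$ at $x_0$. Write
\[
\mu = c\,\delta_{x_0}+\nu, \qquad \nu(\{x_0\})=0,
\]
so that for $x\ne x_0$
\[
H_\mu(x) = \f{c}{\pi(x_0-x)}+H_\nu(x).
\]
The key observation is that the pole term $H_{c\delta_{x_0}}$ creates a large set where $|H_\mu|\ge t$ near $x_0$, provided $H_\nu$ is not too large on the bulk of a small interval about $x_0$; weak homogeneity will guarantee that $\fre$ fills a positive proportion of that interval along a sequence of scales.

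Concretely, by weak homogeneity choose $\delta_0>0$ and a sequence $a_n\downarrow0$ with $\abs{\fre\cap(x_0-a_n,x_0+a_n)}\ge 2\delta_0 a_n$, and set $t_n=c/(2\pi a_n)\to\infty$. Since $\abs{H_{c\delta_{x_0}}(x)}=c/(\pi\abs{x-x_0})>2t_n$ for $x\in(x_0-a_n,x_0+a_n)\setminus\{x_0\}$, the triangle inequality gives
\[
\fre\cap(x_0-a_n,x_0+a_n)\cap\{\abs{H_\nu}<t_n\}\subset\fre\cap\{\abs{H_\mu}\ge t_n\},
\]
so
\[
t_n\abs{\fre\cap\{\abs{H_\mu}\ge t_n\}} \;\ge\; t_n\cdot 2\delta_0 a_n - t_n\abs{(x_0-a_n,x_0+a_n)\cap\{\abs{H_\nu}\ge t_n\}} \;=\; \tfrac{c\delta_0}{\pi} - E_n.
\]

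The main task is to show that $E_n$ can be made arbitrarily small. To do this, fix $\eta>0$ and split $\nu=\nu_1+\nu_2$ with $\nu_1$ the restriction of $\nu$ to $(x_0-\eta,x_0+\eta)$. On $(x_0-\eta/2,x_0+\eta/2)$ one has $\abs{H_{\nu_2}}\le 2\norm{\nu}/(\pi\eta)$ by direct estimation of the Stieltjes kernel, so for $n$ large enough that $a_n<\eta/2$ and $t_n>4\norm{\nu}/(\pi\eta)$, the condition $\abs{H_\nu(x)}\ge t_n$ on $(x_0-a_n,x_0+a_n)$ forces $\abs{H_{\nu_1}(x)}\ge t_n/2$. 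Applying the Loomis bound \eqref{1.4} to $\nu_1$ yields
\[
E_n \le t_n\cdot\f{2C\norm{\nu_1}}{t_n}= 2C\,\nu((x_0-\eta,x_0+\eta)).
\]
Hence
\[
\liminf_{n\to\infty}\, t_n\abs{\fre\cap\{\abs{H_\mu}\ge t_n\}} \;\ge\; \f{c\delta_0}{\pi}-2C\,\nu((x_0-\eta,x_0+\eta)).
\]

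Since $H_\mu\restriction\fre\in L_{w;0}^1$, the left side is $0$. Letting $\eta\downarrow0$ and using $\nu(\{x_0\})=0$ makes the error term vanish, leaving $c\delta_0/\pi\le0$, contradicting $c,\delta_0>0$. The main obstacle is the second paragraph: controlling $\abs{H_\nu}$ on the shrinking intervals about $x_0$. The decomposition $\nu=\nu_1+\nu_2$ handles this cleanly because $\nu_2$'s Hilbert transform is pointwise small near $x_0$ while $\nu_1$ has small total mass as $\eta\downarrow0$, a fact that uses precisely the assumption $\nu(\{x_0\})=0$.
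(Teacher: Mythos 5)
Your proof is correct and follows essentially the same route as the paper: isolate the point mass $c\delta_{x_0}$, use weak homogeneity along a sequence of scales $a_n\downarrow 0$ to see the pole's level set occupies a positive fraction of $(x_0-a_n,x_0+a_n)\cap\fre$, then split the remainder $\nu$ into a near piece (controlled by Loomis' bound \eqref{1.4}) and a far piece (with pointwise bounded Hilbert transform), and send $\eta\downarrow 0$. The paper does the analogous three-way split $\mu=\mu_1+\mu_2+\mu_3$ at the outset and applies \eqref{1.7} once; your two-step triangle inequality reorganizes but does not change the argument.
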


\begin{theorem}\lb{T1.6} There exists a weakly homogeneous set, $\fre$, containing the classical
Cantor set so that if $\mu$ is the conventional Cantor measure, $H_\mu\restriction\fre\in L_{w;0}^1$.

In particular, Theorem~\ref{T1.1} does not extend to weakly homogeneous sets.
\end{theorem}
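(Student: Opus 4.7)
The plan is to take $\fre = C \cup \bigcup_k J_k$, where $C\subset[0,1]$ is the classical Cantor set, $\{I_k\}$ is an enumeration of the connected components of $[0,1]\setminus C$ with $\ell_k=|I_k|$, and $J_k\subset I_k$ is the closed concentric subinterval of length $\ell_k/2$. The key observation is that $\bigcup_k J_k$ lies entirely in $[0,1]\setminus C=[0,1]\setminus\supp\mu$, so adjoining it to $C$ does not change $\mu_\s$ but restores positive Lebesgue density at every dyadic scale of the Cantor construction.

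First I would verify $\fre$ is compact (any accumulation point of $\bigcup_k J_k$ along a subsequence with $\ell_{k_n}\to 0$ lies in $C$, so $\fre$ is closed in $[0,1]$) and weakly homogeneous. The condition \eqref{1.16} is trivial at interior points of $\bigcup_k J_k$; the only non-trivial case is $x_0\in C$. Let $K_n(x_0)$ be the level-$n$ Cantor cell containing $x_0$; it is an interval of length $3^{-n}$ contained in $(x_0-3^{-n},x_0+3^{-n})$. Because $|C|=0$, the gaps $I_k\subset K_n(x_0)$ have total Lebesgue measure $3^{-n}$, so their middle halves contribute $|\bigcup_k J_k\cap K_n(x_0)| = 3^{-n}/2$. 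Consequently
\[
\frac{|\fre\cap(x_0-3^{-n},x_0+3^{-n})|}{2\cdot 3^{-n}}\geq \frac{1}{4}\qquad\text{for every }n\geq 0,
\]
verifying \eqref{1.16}.

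Second, since $|C|=0$, the $L_{w;0}^1$ condition for $H_\mu\restriction\fre$ reduces to $\lim_{t\to\infty}t\cdot\abs{\{x\in\bigcup_k J_k:\abs{H_\mu(x)}>t\}}=0$. Because $\bigcup_k J_k$ is disjoint from $\supp\mu=C$, we have $\mu_\s(\bigcup_k J_k)=\mu(\bigcup_k J_k)=0$. Applying the localized form of Poltoratski's formula \eqref{5.4}---the generalization of \eqref{1.8} referenced in the introduction, namely $\pi t\,\abs{\{x\in A:\pm H_\mu(x)\geq t\}}\to\mu_\s(A)$ for Borel $A$---with $A=\bigcup_k J_k$ yields the desired limit. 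Thus $H_\mu\restriction\fre\in L_{w;0}^1$. Note that simultaneously $\mu_\s(\fre)=\mu(C)=1\neq 0$, confirming that Theorem~\ref{T1.1} fails on weakly homogeneous sets.

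The main obstacle is the reliance on the localized Poltoratski formula \eqref{5.4}. A direct Boole-type estimate, using only that $\dist(J_k,C)\geq\ell_k/4$ gives $|H_\mu(x)|\leq 4/(\pi\ell_k)$ on $J_k$, leads (after summing $2^{m-1}$ level-$m$ gaps of size $3^{-m}$) to a contribution of order $t\cdot(2/3)^{\log_3 t}=t^{1-\log_3(3/2)}$ at threshold $t$, which diverges. The true $o(1/t)$ decay is an intrinsically singular-measure phenomenon: the large values of $H_\mu$ concentrate on $\supp\mu_\s$ and are negligible in the weak-$L^1$ sense away from it, which is exactly what \eqref{5.4} quantifies.
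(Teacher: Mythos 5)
There is a genuine gap at the heart of your argument: the ``localized Poltoratski formula'' you invoke is not a valid consequence of \eqref{5.4}, and in fact it fails for exactly the sets you apply it to. What \eqref{5.4} gives is \emph{weak-$*$} convergence of the measures $\mu^{(t)} = \tfrac{\pi t}{2}\chi_{\{\abs{H_\mu}\geq t\}}\,dx$ to $\mu_\s$. Weak-$*$ convergence lets you pass to the limit against continuous test functions, or (via Portmanteau) obtain a $\limsup$ inequality for closed sets and a $\liminf$ inequality for open sets. It does \emph{not} give $\mu^{(t)}(A)\to\mu_\s(A)$ for a Borel set $A$ whose topological boundary carries positive $\mu_\s$-mass. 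Your set $A=\bigcup_k J_k$ is precisely such a set: its closure contains the entire Cantor set $C=\supp\mu_\s$, and the mass of $\mu^{(t)}$ concentrates on $C$ as $t\to\infty$. So you cannot conclude $t\abs{\{x\in\bigcup J_k : \abs{H_\mu(x)}>t\}}\to 0$ from \eqref{5.4}. Indeed, if such a localization were available for arbitrary Borel sets, Theorem~\ref{T1.1} would hold for all sets and the homogeneity hypothesis (and the entire paper) would be vacuous.

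Moreover, your own back-of-envelope Boole estimate diverging should have been the warning sign rather than something to be repaired by a stronger tool, because for your choice of $J_k$ the conclusion is actually false. A Frostman/scaling estimate shows $\abs{H_\mu(x)}\lesssim d(x)^{\alpha-1}$ with $d(x)=\dist(x,C)$ and $\alpha=\log_3 2$, and $\{\abs{H_\mu}>t\}$ sits inside $\{d\lesssim t^{-1/(1-\alpha)}\}$ with comparable measure $\sim t^{-1}$. Your $J_k$ occupies a \emph{fixed positive fraction} of the $s$-neighborhood of $C$ at every scale $s$, since $\abs{\{x\in\bigcup J_k : d(x)<s\}}\asymp s^{1-\alpha}\asymp\abs{\{d<s\}}$. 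So adjoining the middle halves of \emph{all} gaps pulls in a positive fraction of the level set $\{\abs{H_\mu}>t\}$, and $t\abs{\{x\in\fre:\abs{H_\mu(x)}>t\}}$ stays bounded away from $0$ rather than tending to $0$. This is why the paper's construction is far more delicate: it adjoins middle thirds of gaps at only \emph{one} scale $m(n,j)$ per cell $K_{n,j}$, with $k(n,j)=n+m(n,j)$ growing so fast ($k(n,j+1)=3k(n,j)$) that the added pieces have rapidly shrinking relative measure (see \eqref{4.5a}), and then closes the estimate with a genuinely elementary two-term decomposition $F_\mu=F_{n,j}+\ti F_{n,j}$: Boole's equality \eqref{1.5} controls the near part $F_{n,j}$, Lemma~\ref{L4.2} shows the far part $\ti F_{n,j}$ is uniformly small on the added pieces already placed, and the triple-exponential growth makes the geometric tail \eqref{4.24}--\eqref{4.27} negligible. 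The sparsity of the added scales is also precisely what makes $\fre$ merely \emph{weakly} homogeneous: the density lower bound \eqref{4.9} only holds along a subsequence $\delta_n\to 0$, not uniformly.
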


While the gap between homogeneous and weakly homogeneous sets is not large, we can extend Theorem~\ref{T1.1}
to partly fill it in. We call a set, $\fre$, {\it non-uniformly homogeneous\/} if it is closed and obeys
\begin{equation} \lb{1.21}
\liminf_{a\downarrow 0}\, (2a)^{-1} \abs{\fre\cap (x-a,x+a)} >0
\end{equation}
for all $x\in\fre$.

\begin{theorem} \lb{T1.7} Let $\fre$ be non-uniformly homogeneous and let $\mu$ be a measure on $\bbR$ so
that $H_\mu\restriction\fre\in L_{w;0}^1$. Then
\begin{equation} \lb{1.22}
\mu_\s(\fre)=0
\end{equation}
\end{theorem}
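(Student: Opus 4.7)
The plan is to reduce Theorem~\ref{T1.7} to the homogeneous case (Theorems~\ref{T1.1} and \ref{T1.4}) via a countable nested decomposition of $\fre$ on each piece of which the ambient set carries a uniform Carleson lower density. For each positive integer $N$, set
\[
\fre_N = \bigl\{ x \in \fre : \abs{\fre \cap (x-a, x+a)} \geq 2a/N \text{ for all } 0 < a < 1/N \bigr\}.
\]
Since $x \mapsto \abs{\fre \cap (x-a, x+a)}$ is continuous for each fixed $a > 0$ (bounded convergence, as interval endpoints are Lebesgue-null), each $\fre_N$ is closed, and clearly $\fre_N \subset \fre_{N+1}$. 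Condition~(\ref{1.21}) guarantees $\fre = \bigcup_N \fre_N$, so by continuity of $\mu_\s$ from below it suffices to prove $\mu_\s(\fre_N) = 0$ for each fixed $N$.

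The key observation is that although $\fre_N$ itself need not be homogeneous, the \emph{ambient} set $\fre$ satisfies the Carleson estimate~(\ref{1.14}) uniformly at every point of $\fre_N$ across all scales up to $\diam \fre$: letting $a \uparrow 1/N$ in the defining inequality gives $\abs{\fre \cap (x - 1/N, x + 1/N)} \geq 2/N^2$ for $x \in \fre_N$, whence
\[
\abs{\fre \cap (x-a, x+a)} \geq 2 \delta_N a, \qquad x \in \fre_N, \; 0 < a \leq \diam \fre,
\]
with $\delta_N := \min\bigl(1/N, 1/(N^2 \diam \fre)\bigr) > 0$. After the standard reductions afforded by \eqref{1.6}--\eqref{1.8} and Remark~3 after Theorem~\ref{T1.1}, I may assume $\mu$ is singular and supported on a closed Lebesgue-null subset of $\fre$. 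The cleanest execution is then to construct a compact homogeneous set $E_N$ with $\fre_N \subset E_N \subset \fre$: since $E_N \subset \fre$ forces $H_\mu \restriction E_N \in L^1_{w;0}$, Theorem~\ref{T1.1} applied to $E_N$ would give $\mu_\s(E_N) = 0$ and hence $\mu_\s(\fre_N) = 0$.

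The main obstacle is producing such an $E_N$: the displayed inequality controls the density of $\fre$ only at points of $\fre_N$, whereas homogeneity of $E_N$ requires density at \emph{every} point of $E_N$, including those adjoined to $\fre_N$ to saturate. A more robust route bypassing this difficulty is to inspect the proof of Theorem~\ref{T1.4} and verify that its use of the Carleson estimate~(\ref{1.14}) is confined to the support of the singular mass being controlled rather than to arbitrary points of $\fre$---in which case the argument goes through with $\delta$ replaced by $\delta_N$, yielding
\[
\mu_\s(\fre_N) \leq C(\delta_N) \liminf_{t \to \infty} t \abs{\bigl\{x \in \fre : \abs{H_\mu(x)} \geq t\bigr\}} = 0
\]
and completing the proof upon summation over $N$.
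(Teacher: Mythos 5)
Your decomposition $\fre=\bigcup_N\fre_N$ with $\fre_N$ closed is exactly the one the paper uses (its $\fre_n$), and you correctly flag the subtlety that $\fre_N$ itself need not be homogeneous because the density bound in its definition governs $\fre$, not $\fre_N$. Your observation that the proof of Theorem~\ref{T1.4} only invokes the Carleson bound \eqref{1.14} at points of the closed null set $A$ supporting $\mu$ is also correct, and it is the right idea. But your ``robust route'' skips the step that makes the idea usable. To run the Theorem~\ref{T1.4} argument with $\delta$ replaced by $\delta_N$, every interval $I$ of $\Gamma_t\setminus A$ must have $\overline I$ touching a point of $A\cap\fre_N$, i.e.\ you need $A\subset\fre_N$. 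After your reductions you only have $A\subset\fre$. Some components of $\Gamma_t\setminus A$ will abut $A\setminus\fre_N$, where no density estimate of the form $\abs{\fre\cap(x-a,x+a)}\geq2\delta_N a$ is available, and the Vitali covering then fails to cover enough of $\Gamma_t$. Moreover, the quantity that \eqref{1.8} controls from below is $t\abs{\Gamma_t}$ for the \emph{full} measure $\mu$, which yields $\mu(A)$, not $\mu(A\cap\fre_N)$; there is no way inside that argument to isolate only the mass sitting in $\fre_N$.

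The fix is to first replace $\mu$ by $\mu_N=\mu\restriction\fre_N$, so that the singular support does lie inside $\fre_N$, and to transfer the hypothesis $H_\mu\restriction\fre\in L^1_{w;0}$ to $H_{\mu_N}\restriction\fre\in L^1_{w;0}$. This transfer is not an elementary reduction like Propositions~\ref{P2.2}--\ref{P2.3}: the complementary piece $\mu-\mu_N$ is still supported on $\fre$, so $H_{\mu-\mu_N}\restriction\fre$ need not be small in any a priori sense. The paper handles exactly this point with Proposition~\ref{P5.2}, the mutual-singularity lemma $t\abs{\{\abs{H_{\mu_N}}\geq 2t\}\cap\{\abs{H_{\mu-\mu_N}}\geq t\}}\to0$, whose proof relies on Poltoratski's weak-$*$ convergence \eqref{5.4} of the level-set measures to $d\mu_\s$. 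That lemma is the genuinely new ingredient of Section~\ref{s5}, and it is absent from your proposal; without it, the displayed inequality $\mu_\s(\fre_N)\leq C(\delta_N)\liminf_{t\to\infty} t\abs{\{x\in\fre:\abs{H_\mu(x)}\geq t\}}$ has not been established. Once you interpose Proposition~\ref{P5.2}, your argument becomes essentially the paper's proof of Theorem~\ref{T1.8} (of which Theorem~\ref{T1.7} is the special case where $\fre$ is closed and \eqref{1.21} holds everywhere).
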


In fact, we will obtain this from a stronger result. We emphasize that $\fre$ in the next theorem is not assumed
closed.

\begin{theorem} \lb{T1.8} Let $\fre$ be a Borel set in $\bbR$ and $\mu$ a finite measure so that $H_\mu\restriction
\fre\in L_{w;0}^1$. Then
\begin{equation} \lb{1.22x}
\mu_\s \bigl( \{x\in\fre\bigm| \liminf_{a\downarrow 0}\, (2a)^{-1} \abs{\fre\cap (x-a,x+a)} >0\}\bigr) =0
\end{equation}
\end{theorem}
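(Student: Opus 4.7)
The plan is to deduce Theorem~\ref{T1.8} from Theorem~\ref{T1.1} by exhausting the set in \eqref{1.22x} and then applying the theorem on a homogeneous subset. Denote that set by $A$ and decompose
\[
A \;=\; \bigcup_{n,k \ge 1} A_{n,k}, \qquad A_{n,k} \;=\; \bigl\{x \in \fre \,:\, |\fre \cap (x-a, x+a)| \ge 2a/n \;\text{for all}\; 0 < a \le 1/k\bigr\}.
\]
By countable subadditivity of $\mu_\s$ it suffices to show $\mu_\s(A_{n,k}) = 0$ for each fixed $(n, k)$. Fix such a pair. Since $\mu_\s$ is singular and inner regular, for every $\veps > 0$ one chooses a compact null set $K \subset A_{n,k}$ with $\mu_\s(K) > \mu_\s(A_{n,k}) - \veps$; the density defining $A_{n,k}$ is then inherited at each point of $K$, namely $|\fre \cap (x-a, x+a)| \ge 2a/n$ for $x \in K$ and $0 < a \le 1/k$.

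The main step is to construct a compact \emph{homogeneous} set $\fre^*$ with $K \subset \fre^* \subset \fre$. A natural candidate is the thickening intersected with $\fre$,
\[
\fre^* \;=\; \overline{\fre \cap \bigl(K + [-\eta, \eta]\bigr)}, \qquad \eta \le \tfrac{1}{3k}.
\]
At $y \in K$ and scale $0 < a \le \eta$, every point of $\fre \cap (y-a, y+a)$ lies within $a \le \eta$ of $K$ and so belongs to $\fre^*$, and the density at $y$ gives $|\fre^* \cap (y-a, y+a)| \ge 2a/n$. At $y \in \fre^* \setminus K$ with $d := \dist(y, K) > 0$ and scales $a \in [2d, 2\eta]$, let $x \in K$ be a nearest point; then $(x - a/2, x + a/2) \subset (y-a, y+a)$ and $\fre \cap (x - a/2, x+a/2) \subset \fre^*$ since those points are within $a/2 \le \eta$ of $K$, so the density at $x$ transfers to $|\fre^* \cap (y-a, y+a)| \ge a/n$. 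Once homogeneity of $\fre^*$ is established with some constant $\delta^* = \delta^*(n, k)$, the inclusion $\fre^* \subset \fre$ gives $H_\mu \restriction \fre^* \in L_{w;0}^1$, Theorem~\ref{T1.1} applies and yields $\mu_\s(\fre^*) = 0$, hence $\mu_\s(K) = 0$, and letting $\veps \downarrow 0$ finishes the proof.

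The main obstacle is the remaining regime for $y \in \fre^* \setminus K$ at very small scales $a < 2d$: then $(y-a, y+a)$ avoids $K$ and no a priori density of $\fre$ near $y$ is available, so the naive $\fre^*$ need not be homogeneous there. Two plausible fixes: (i) refine the definition of $\fre^*$ to admit $y \in \fre$ only when $\dist(y, K)$ is controlled by the intrinsic density scale of $\fre$ at $y$ (say, via a Whitney-type stopping-time decomposition rooted at $K$), so that the problematic small-scale case becomes vacuous; or (ii) bypass the construction of $\fre^*$ entirely by inspecting the proof of Theorem~\ref{T1.4}: if the homogeneity input \eqref{1.14} is invoked only at points of $\supp(\mu) \subset A$, the same proof yields a local version of Theorem~\ref{T1.4} applicable directly to $\mu \restriction K$, with constants depending on $n$ and $k$ in place of $\delta$, from which $\mu_\s(K) = 0$ follows at once. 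Option (ii) is the cleaner route if the structure of the proof of Theorem~\ref{T1.4} cooperates.
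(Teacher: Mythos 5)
Your high-level plan — exhaust the density set in \eqref{1.22x} by countably many pieces with uniform lower density and reduce each piece to Theorem~\ref{T1.1} — is the right skeleton and matches the paper's strategy, which works with the sets $\fre_n=\{x\in\fre\mid \forall a<1/n,\ |(x-a,x+a)\cap\fre|\geq 2a/n\}$. But both routes you propose for carrying out the reduction have a genuine gap.

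Your main route (construct a compact homogeneous $\fre^*$ with $K\subset\fre^*\subset\fre$ and apply Theorem~\ref{T1.1} to the \emph{original} $\mu$) cannot work in general, and your fix~(i) does not rescue it: the density hypothesis is a statement about $\fre$ \emph{only at points of $K$}, and says nothing about the density of $\fre$ at nearby points of $\fre$. For example, take $K=\{0\}$ and $\fre=\{0\}\cup\bigcup_m [2^{-m},\,2^{-m}(1+1/n)]$. Every scale around $0$ sees density $\approx 1/n$, so $0$ lies in every $A_{n,k}$, yet any $\fre^*$ with $\{0\}\subset\fre^*\subset\fre$ that contains nontrivially many points of the gaps must contain a point $y$ of some $[2^{-m},2^{-m}(1+1/n)]$, and at scales between $2^{-m}/n$ and $2^{-m}/2$ the set $\fre$ (hence $\fre^*$) has density $O(1/n)\cdot(2^{-m}/a)$ at $y$, which is not bounded below; so no homogeneous $\fre^*\subset\fre$ containing $K$ exists. (Of course in this example the hypothesis $H_\mu\restriction\fre\in L_{w;0}^1$ also fails for $\mu=\delta_0$, so the theorem is safe — but it shows the set-theoretic construction you envision is not available in general, and a Whitney stopping-time refinement of $\fre^*$ does not create density where $\fre$ has none.)

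Your fix~(ii) is the right idea and is essentially what the paper does, but it is incomplete in a way that matters: having localized to $\nu=\mu\restriction K$ (or $\mu_n=\mu\restriction\fre_n$), you must know that $\liminf_{t\to\infty} t\,|\{x\in\fre:|H_\nu(x)|\geq t\}|=0$, and this does \emph{not} follow from $H_\mu\restriction\fre\in L_{w;0}^1$ by the naive triangle inequality \eqref{1.7} plus the Kolmogorov bound \eqref{1.4}: the complementary measure $\mu-\nu$ has total mass of order $\|\mu\|-\mu(K)$, which is not small, so \eqref{1.4} only gives $\limsup t|\{x\in\fre:|H_\nu|\geq 2t\}|\leq 2C\|\mu-\nu\|$. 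In the proof of Theorem~\ref{T1.1} this step succeeds because $\mu$ is already reduced (via Proposition~\ref{P2.3}, which needs $\fre$ closed) to a measure supported on $\fre$, so $\|\mu-\mu_n\|\to0$; here $\fre$ is only Borel and that reduction is unavailable. The missing ingredient is exactly Proposition~\ref{P5.2}, the weak-$*$ convergence $\tfrac12\pi t\,\chi_{\{|H_\mu|\geq t\}}\,dx\to d\mu_\s$ (Poltoratski's formula), which implies that for mutually singular singular parts, $t\,|\{|H_{\mu_n}|\geq 2t\}\cap\{|H_{\nu_n}|\geq t\}|\to 0$. Inserting that estimate in place of \eqref{1.4} in the \eqref{1.7} decomposition yields $H_{\mu_n}\restriction\fre\in L_{w;0}^1$, and then Proposition~\ref{P5.1} finishes as in your sketch. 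In short: your option~(ii) points in the right direction, but without the mutual-singularity lemma the passage from $\mu$ to $\mu\restriction\fre_n$ is a real gap, not a formality.
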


This is to be compared with the result of Poltoratski--Remling \cite{PR} that if $\fre$ is Borel and $H_\mu
\restriction\fre=0$, then
\begin{equation} \lb{1.23}
\mu_\s \bigl( \{x\in\fre\bigm| \limsup_{a\downarrow 0}\, (2a)^{-1} \abs{\fre\cap (x-a,x+a)} >0\}\bigr) =0
\end{equation}
and the statement that follows from our proof of Theorem~\ref{T1.5} that if $\mu_\pp$ is the pure point part
of $\mu$, then if $H_\mu\restriction\fre\in L_{w;0}^1$, then
\[
\mu_\pp \bigl( \{x\in\fre\bigm| \limsup_{a\downarrow 0}\, (2a)^{-1} \abs{\fre\cap (x-a,x+a)} >0\}\bigr) =0
\]
Moreover, it is to be noted that the example in Theorem~\ref{T1.6} shows that in Theorem~\ref{T1.8}, we cannot
replace \eqref{1.22x} by \eqref{1.23}.

In Section~\ref{s2}, we reduce the proof of Theorem~\ref{T1.1} to proving Theorem~\ref{T1.4}. In
Section~\ref{s3}, we prove Theorem~\ref{T1.4}. In proving Theorem~\ref{T1.4}, we first show that if
$[a,b]$ is an interval on which $\abs{F_\mu(x+i0)}\geq t$, then $\abs{F_\mu (x+i(b-a))}\geq t/8\pi^2$.
Then we will use this to prove that on most of $[a-(b-a),a]$ and $[b,b+(b-a)]$, $\abs{F_\mu(x+i0)}$
is a significant fraction of $t$, which is the key to the proof. In Section~\ref{s4}, we prove
Theorems~\ref{T1.5} and \ref{T1.6}. In Section~\ref{s5}, we prove Theorem~\ref{T1.8}, and so
Theorem~\ref{T1.7}.

\medskip
We want to thank Jonathan Breuer and Yoram Last for useful discussions.

\section{Reduction to Theorem~\ref{T1.4} } \lb{s2}

In this section, we show that Theorem~\ref{T1.4} implies Theorem~\ref{T1.1}.

\begin{proposition} \lb{P2.1} Let $\mu$ have the form \eqref{1.9}. Then for any set $\fre\subset\bbR$,
\begin{equation} \lb{2.1}
H_\mu\restriction\fre\in L_{w;0}^1\Leftrightarrow H_{\mu_\s}\restriction\fre \in L_{w;0}^1
\end{equation}
In particular, we need only prove Theorem~\ref{T1.1} for purely singular measures to get it for all measures.
\end{proposition}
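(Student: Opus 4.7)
The plan is to exploit linearity of the Hilbert transform together with the subadditivity estimate \eqref{1.7} and the known membership $H_{f\,dx} \in L_{w;0}^1$ for $f\in L^1$.

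First, write $H_\mu = H_{f\,dx} + H_{\mu_\s}$. Since $f\in L^1(\bbR)$, the argument recapitulated in the introduction (density of $L^1\cap L^2$ in $L^1$, together with \eqref{1.7}) shows that \eqref{1.6} holds for the a.c.\ part, i.e.\ $H_{f\,dx}\in L_{w;0}^1$. Restricting to $\fre$ can only shrink the level sets, so a fortiori $H_{f\,dx}\restriction\fre\in L_{w;0}^1$.

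Next, I would apply \eqref{1.7} with $\theta=\tfrac12$ to the pair $(H_{f\,dx},H_{\mu_\s})$ and then intersect with $\fre$ to obtain
\begin{equation*}
t\,\bigl|\{x\in\fre\mid \abs{H_\mu(x)}>t\}\bigr|
\le 2\cdot\tfrac{t}{2}\bigl|\{x\mid\abs{H_{f\,dx}(x)}>\tfrac{t}{2}\}\bigr|
+2\cdot\tfrac{t}{2}\bigl|\{x\in\fre\mid\abs{H_{\mu_\s}(x)}>\tfrac{t}{2}\}\bigr|.
\end{equation*}
Letting $t\to\infty$, the first term vanishes by the previous paragraph, so $H_{\mu_\s}\restriction\fre\in L_{w;0}^1$ forces $H_\mu\restriction\fre\in L_{w;0}^1$. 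For the opposite implication, write $H_{\mu_\s}=H_\mu-H_{f\,dx}$ and run the same estimate with the roles of $H_\mu$ and $H_{\mu_\s}$ swapped; once again the $H_{f\,dx}$ contribution vanishes in the limit, yielding $H_\mu\restriction\fre\in L_{w;0}^1\Rightarrow H_{\mu_\s}\restriction\fre\in L_{w;0}^1$. This establishes \eqref{2.1}.

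Finally, the reduction to the singular case: suppose Theorem~\ref{T1.1} is known for purely singular measures and let $\mu$ be arbitrary. By \eqref{2.1}, the hypothesis $H_\mu\restriction\fre\in L_{w;0}^1$ of Theorem~\ref{T1.1} is equivalent to $H_{\mu_\s}\restriction\fre\in L_{w;0}^1$; applying the singular case to $\mu_\s$ gives $(\mu_\s)_\s(\fre)=\mu_\s(\fre)=0$, which is precisely \eqref{1.15}. No step here is substantial—the only mild point is remembering that one may restrict to $\fre$ on both sides of \eqref{1.7} because the decomposition $H_\mu=H_{f\,dx}+H_{\mu_\s}$ is pointwise valid Lebesgue-a.e.
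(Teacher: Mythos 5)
Your argument is correct and is the same as the paper's, just spelled out in more detail: the paper compresses the two-directional estimate into the single observation that \eqref{1.7} with $\theta=\tfrac12$ makes $L_{w;0}^1$ a vector space, after which $H_\mu-H_{\mu_\s}=H_{f\,dx}\in L_{w;0}^1$ (by \eqref{1.6}) immediately gives the equivalence. Your explicit display is the unpacked version of that same argument, and your treatment of the ``in particular'' clause matches the intended reduction.
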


\begin{remark}  This shows the advantage of working with $L_{w;0}^1$. Purely singular measures are never
reflectionless (for $\abs{\{x\mid F_\mu(x+i0)=0\}}=0$ and thus, $\Ima F_\mu (x+i0)>0$ a.e. on $\fre$ if
$H_\mu\restriction\fre =0$).
\end{remark}

\begin{proof} By \eqref{1.7} with $\theta=\f12$, $L_{w;0}^1$ is a vector space. Since $H_\mu-H_{\mu_\s}
= H_{f\,d\mu}\in L_{w;0}^1$, by \eqref{1.6}, we get \eqref{2.1} immediately.
\end{proof}

\begin{proposition} \lb{P2.2} Let $\fre$ be a closed set. Let $\mu$ be a measure with $\mu(\fre)=0$. Then
\begin{equation} \lb{2.2}
H_\mu\restriction\fre\in L_{w;0}^1
\end{equation}
\end{proposition}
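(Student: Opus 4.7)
The plan is to split $\mu$ into two pieces: one whose support sits at positive distance from $\fre$ (on which $H_\mu$ is bounded on $\fre$ by a direct kernel estimate), and a small-mass remainder to which Loomis's weak-type bound \eqref{1.4} applies.

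Concretely, for each integer $n \geq 1$, I would set $U_n = \{x \in \bbR \mid \dist(x,\fre) > 1/n\}$ and $\mu_n = \mu \restriction U_n$. Since $\fre$ is closed, the $U_n$ are open, nested upward, and $\bigcup_n U_n = \bbR \setminus \fre$; the hypothesis $\mu(\fre)=0$ then gives $\norm{\mu - \mu_n} = \mu(\bbR \setminus U_n) \to 0$ by continuity of measure. The key pointwise estimate is that for $x \in \fre$ and $y \in U_n$, $\abs{y-x} \geq \dist(y,\fre) > 1/n$, so the integral defining $F_{\mu_n}$ converges absolutely and
\[
\abs{H_{\mu_n}(x)} \leq \f{1}{\pi}\int \f{d\mu_n(y)}{\abs{y-x}} \leq \f{n\norm{\mu}}{\pi} =: M_n
\]
uniformly for $x \in \fre$.

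Writing $H_\mu = H_{\mu_n} + H_{\mu - \mu_n}$ and applying \eqref{1.7} with $\theta = 1/2$, the first level set $\{x\in\fre\mid \abs{H_{\mu_n}(x)}\geq t/2\}$ is empty once $t > 2M_n$, so together with \eqref{1.4} on $H_{\mu-\mu_n}$,
\[
t \abs{\{x \in \fre \mid \abs{H_\mu(x)} \geq t\}} \leq t\abs{\{x\in\bbR\mid\abs{H_{\mu-\mu_n}(x)}\geq t/2\}} \leq 2C\norm{\mu - \mu_n}.
\]
Taking $\limsup_{t\to\infty}$ and then letting $n\to\infty$ delivers \eqref{2.2}.

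There is no genuine obstacle: the closedness of $\fre$ is used exactly to guarantee that $\bbR\setminus\fre$ is exhausted by open sets at positive distance from $\fre$, and $\mu(\fre)=0$ is used exactly to make the leftover mass arbitrarily small in total variation. The structure of the argument also reflects the robustness observation in the introduction, since the ``bad'' piece of $\mu$ enters only through $\norm{\mu-\mu_n}$, so the $L^1_{w;0}$ property of $H_\mu\restriction\fre$ is insensitive to modifications of $\mu$ off $\fre$.
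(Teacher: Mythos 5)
Your proof is correct and follows essentially the same route as the paper: decompose $\mu$ by distance to $\fre$, observe that $H_{\mu_n}\restriction\fre$ is uniformly bounded by a direct kernel estimate, then control the small-mass remainder via Loomis's weak-type bound \eqref{1.4} and the sublinearity inequality \eqref{1.7}, letting $n\to\infty$ at the end. The only differences are cosmetic (strict vs.\ non-strict inequality in the distance cutoff, and spelling out that the level set for the bounded piece is eventually empty).
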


\begin{proof} Let $\mu_m =\mu\restriction\{x\mid\dist(x,\fre)\geq m^{-1}\}$. Then for $x\in\fre$,
\begin{equation} \lb{2.3}
H_{\mu_m}(x) = \f{1}{\pi} \int \f{d\mu_m(y)}{y-x}
\end{equation}
so
\begin{equation} \lb{2.4}
\norm{H_{\mu_m}\restriction\fre}_\infty \leq \f{m}{\pi}\, \norm{\mu_m}
\end{equation}
so $H_{\mu_m}\in L_{w;0}^1$.

By \eqref{1.7} with $\theta=\f12$, for any $m$,
\begin{align}
\limsup_{t\to\infty}\, t \abs{\{x\in\fre\mid\abs{H_\mu(x)}\geq t\}}
&\leq 2\limsup_{t\to\infty}\, t\abs{\{x\in\fre\mid\abs{H_{\mu-\mu_m}(x)}\geq t\}}
\no \\
&\leq 2 C\norm{\mu-\mu_m} \lb{2.6}
\end{align}
where $C$ is the constant in \eqref{1.4}.

Since \eqref{2.6} holds for all $m$ and $\norm{\mu-\mu_m}\to 0$ (since $\mu(\fre)=0$), we conclude $H_\mu
\restriction\fre\in L_{w;0}^1$.
\end{proof}

\begin{proposition} \lb{P2.3} Let $\fre$ be a closed set. Let $\nu=\mu\restriction\fre$, that is, $\nu(A)
=\mu(\fre\cap A)$. Then
\begin{equation} \lb{2.7}
H_\nu\restriction\fre \in L_{w;0}^1 \Leftrightarrow H_\mu\restriction\fre\in L_{w;0}^1
\end{equation}
In particular, it suffices to prove Theorem~\ref{T1.1} for purely singular measures supported on $\fre$.
\end{proposition}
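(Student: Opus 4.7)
The plan is to decompose $\mu = \nu + \rho$, where $\nu = \mu\restriction\fre$ and $\rho = \mu - \nu = \mu\restriction(\bbR\setminus\fre)$, and then appeal to the two preceding propositions. Since $\fre$ is closed, $\rho$ is concentrated on the open set $\bbR\setminus\fre$, which means $\rho(\fre) = 0$. By Proposition~\ref{P2.2} applied to $\rho$, we immediately obtain $H_\rho\restriction\fre \in L_{w;0}^1$.

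Next, I would use linearity of the Hilbert transform together with the fact (noted in the proof of Proposition~\ref{P2.1}) that $L_{w;0}^1$ is a vector space: from $H_\mu = H_\nu + H_\rho$ we get that $H_\mu\restriction\fre$ and $H_\nu\restriction\fre$ differ by an element of $L_{w;0}^1$, which gives \eqref{2.7}.

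For the ``In particular'' assertion, I would combine this with Proposition~\ref{P2.1}. Starting from an arbitrary $\mu$ with Lebesgue decomposition \eqref{1.9}, Proposition~\ref{P2.1} lets us replace $\mu$ by its singular part $\mu_\s$ without affecting the hypothesis of Theorem~\ref{T1.1}, and \eqref{2.7} then lets us further replace $\mu_\s$ by $\mu_\s\restriction\fre$, which is purely singular and supported on $\fre$. Since $(\mu_\s\restriction\fre)(\fre) = \mu_\s(\fre)$, proving the conclusion $\mu_\s(\fre)=0$ for measures supported on $\fre$ yields it in general.

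There is no real obstacle here; the only point that needs care is the use of closedness of $\fre$, which is what guarantees that $\rho$ lives a positive distance from each compact piece of $\fre$ (so that Proposition~\ref{P2.2} applies without issue) and also ensures $\mu(\fre) = \nu(\fre)$ in the final reduction step.
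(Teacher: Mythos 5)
Your proof is correct and follows essentially the same route as the paper: write $\mu = \nu + \eta$ with $\eta = \mu - \nu$, apply Proposition~\ref{P2.2} to $\eta$ (which satisfies $\eta(\fre)=0$), and then use that $L_{w;0}^1$ is a vector space. One small imprecision in your closing remark: the identity $\nu(\fre)=\mu(\fre)$ holds trivially from the definition $\nu(A)=\mu(\fre\cap A)$ and has nothing to do with $\fre$ being closed; closedness is needed only so that Proposition~\ref{P2.2} (whose proof relies on approximating by measures supported a positive distance from $\fre$) can be invoked.
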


\begin{proof} Let $\eta=\mu-\nu$. By Proposition~\ref{P2.2},
\begin{equation} \lb{2.8}
H_\mu\restriction\fre - H_\nu \restriction\fre = H_\eta\restriction\fre\in L_{w;0}^1
\end{equation}
Since $L_{w;0}^1$ is a vector space, \eqref{2.8} implies \eqref{2.7}.
\end{proof}

\begin{proof}[Proof of Theorem~\ref{T1.1} given Theorem~\ref{T1.4}] By Proposition~\ref{2.3}, we can suppose
$\mu$ is purely singular and supported by $\fre$. Thus, there exists $A_\infty\subset\fre$ with $\abs{A_\infty}
=0$, so $\mu(\bbR\setminus A_\infty)=0$.

By regularity of measures, we can find $A_n\subset A_{n+1}\subset\cdots\subset A_\infty$ with each $A_n$
closed, and so
\begin{equation} \lb{2.9}
\mu(A_\infty\setminus A_n)\to 0
\end{equation}

Define $\mu_n =\mu\restriction A_n$ and $\nu_n=\mu-\mu_n$. By \eqref{1.7} with $\theta=\f12$, $H_\mu\restriction
\fre\in L_{w;0}^1$, and \eqref{1.4},
\begin{align}
\limsup_{t\to\infty}\, t \abs{\{x\in\fre\mid\abs{H_{\mu_n}(x)}\geq t\}}
&\leq 2\limsup_{t\to\infty}\, t \abs{\{x\in\fre\mid\abs{H_{\nu_n}(x)}\geq t\}} \notag \\
&\leq 2C\mu(A_\infty\setminus A_n) \lb{2.10}
\end{align}

$A_n$ obeys (i)--(iii) for $\mu_n$, so by \eqref{1.17},
\begin{equation} \lb{2.11}
\mu(A_n) =\mu_n(\fre) \leq 2CC_1 \mu(A_\infty\setminus A_n)
\end{equation}

As $n\to\infty$, $\mu(A_n) \to \mu_\s(\fre)$ while, by \eqref{2.9}, $\mu(A_\infty\setminus A_n) \to 0$.
So $\mu_\s(\fre)=0$.
\end{proof}

\section{Proof of Theorem~\ref{T1.4}} \lb{s3}

Throughout this section, where we will prove Theorem~\ref{T1.4} and so complete the proof of Theorem~\ref{T1.1},
we suppose $\fre$ is homogeneous with homogeneity constant $\delta$, and $\mu$ is a measure for which there
exists $A\subset\fre$ obeying properties (i)--(iii) of Corollary~\ref{C1.3}. In particular, since $\mu$ is
 singular, for a.e.\ $x\in\bbR$,
\begin{equation} \lb{3.1}
F_\mu (x+i0)=\pi H_\mu(x)
\end{equation}
We will consider $F_\mu$ throughout.

The key will be to prove for all large $t$,
\begin{equation} \lb{3.2}
\biggl|\biggl\{x\in\fre\biggm| \abs{F_\mu(x+i0)} > \f{\delta}{128\pi^2} t\biggr\}\biggr| \geq
\f{\delta}{24}\, \abs{\{x\mid\abs{F_\mu(x+i0)} > t\}}
\end{equation}
We will do this by showing that if $I$ is an interval in
$\bbR\setminus A$ where $\abs{F_\mu (x+i0)}>t$, then at most points
of the two touching intervals of the same size, $\abs{F_\mu}\geq
\delta t/128\pi^2$. We will do this in two steps. We show that $F(z)$ at
points over $I$ with $\Ima z=\abs{I}$ is comparable to $t$ and use
that to control $F$ on the touching intervals. A Vitali covering map
argument will boost that up to the full sets. We need

\begin{proposition}\lb{P3.2} Let
\begin{equation} \lb{3.7}
I=[c-a,c+a]
\end{equation}
be an interval contained in
\begin{equation} \lb{3.8}
\{x\mid\abs{F_\mu(x+i0)}\geq t\}
\end{equation}
Then
\begin{equation} \lb{3.9}
\abs{F_\mu(c+a+2ia)} \geq \f{t}{8\pi^2}
\end{equation}
\end{proposition}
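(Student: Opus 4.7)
\noindent\emph{Plan.} The strategy is to bound $|F_\mu(z_0)|$ from below by case-splitting on the size of $\mu(I)$, controlling either $\Ima F_\mu(z_0)$ or $\Real F_\mu(z_0)$ as appropriate. A direct computation from \eqref{1.1} gives
\[
\Ima F_\mu(z_0) = \int \frac{2a}{(y-c-a)^2 + 4a^2}\,d\mu(y) \geq \frac{\mu(I)}{4a},
\]
where the inequality comes from restricting to $y \in I$, on which $(y-c-a)^2 + 4a^2 \leq 8a^2$. Since $|F_\mu(z_0)| \geq \Ima F_\mu(z_0)$, the case $\mu(I) \geq at/(2\pi^2)$ yields $|F_\mu(z_0)| \geq t/(8\pi^2)$ immediately.

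In the complementary case $\mu(I) < at/(2\pi^2)$ I would split $\mu = \nu + \eta$ with $\nu = \mu\restriction I$ and $\eta = \mu\restriction(\bbR \setminus I)$, so $F_\mu = F_\nu + F_\eta$. The bound $|y - z_0| \geq 2a$ for $y \in I$ gives $|F_\nu(z_0)| \leq \mu(I)/(2a) < t/(4\pi^2)$. For $F_\eta$, applying the weak $L^1$ inequality \eqref{1.4} to the small measure $\nu$ shows that $\{x : |F_\nu(x+i0)| \geq t/2\}$ has Lebesgue measure at most $2\pi C \nu(\bbR)/t \leq Ca/\pi$; using Davis's constant $C = 1$ this leaves a subset $I^* \subset I$ with $|I^*| > a$ on which $|F_\nu(x+i0)| < t/2$, and then the triangle inequality with $|F_\mu(x+i0)| \geq t$ gives $|F_\eta(x+i0)| > t/2$ on $I^*$. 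Since $\eta$ is supported outside $I$, the function $F_\eta$ is real-analytic on $(c-a, c+a)$, so by a sign pigeonhole on $\{x \in (c-a, c+a) : |F_\eta(x)| > t/2\}$ there exist a sign $\sigma \in \{\pm\}$ and a subset $I^{**} \subset I^*$ with $|I^{**}| > a/2$ on which $\sigma F_\eta(x) \geq t/2$.

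The main obstacle is then to transfer this one-sided pointwise lower bound on $F_\eta$ on $I^{**}$ into a corresponding lower bound on $\sigma \Real F_\eta(z_0)$; combined with $|\Real F_\nu(z_0)| < t/(4\pi^2)$ this would give $|\Real F_\mu(z_0)| \geq t/(8\pi^2)$ and hence the claim. I would attempt this via the Poisson integral representation of the harmonic function $\Real F_\eta$ on $\bbC_+$. Note that the Poisson mass of $I$ at $z_0$ is
\[
\int_I \frac{1}{\pi}\,\frac{2a}{(x-c-a)^2+4a^2}\,dx = \frac{1}{\pi}\bigl[\arctan((x-c-a)/(2a))\bigr]_{c-a}^{c+a} = \frac{1}{4},
\]
so the direct contribution of $I^{**}$ to $\sigma \Real F_\eta(z_0)$ is already of order $t$. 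The delicate part is controlling the contribution from $\bbR \setminus I$, where $\Real F_\eta(x+i0) = \pi H_\eta(x)$ may have arbitrary sign and is not \emph{a priori} absolutely integrable against the Poisson kernel. I would handle this either by exploiting the analyticity of $F_\eta$ on a complex neighborhood of the interior of $I$ to extract cancellation near the endpoints, or by a subharmonic-function argument using that $1/|F_\mu|$ is log-subharmonic on $\bbC_+$ (since $F_\mu$ is zero-free there as a Herglotz function).
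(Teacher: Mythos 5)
Your first case (large $\mu(I)$) is correct and clean: restricting the integral for $\Ima F_\mu(z_0)$ to $I$, where the Poisson kernel is $\geq 1/(4a)$, gives $\Ima F_\mu(z_0)\geq\mu(I)/(4a)$, and the hypothesis $\mu(I)\geq at/(2\pi^2)$ finishes it.

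Your second case has a genuine gap, and it sits exactly where you flag it. Everything up to the extraction of $I^{**}\subset I$ with $|I^{**}|>a/2$ and $\sigma F_\eta\geq t/2$ on $I^{**}$ is fine, and the computation that the Poisson mass of $I$ at $z_0$ equals $1/4$ is also correct. But you then need a lower bound on the full Poisson integral
\[
\sigma\Real F_\eta(z_0)=\text{p.v.}\int_{\bbR}\f{1}{\pi}\,\f{2a\,\sigma\Real F_\eta(\lambda+i0)}{(\lambda-c-a)^2+4a^2}\,d\lambda ,
\]
and the contribution from $\bbR\setminus I$ is not under control: $\Real F_\eta(\cdot+i0)=\pi H_\eta$ is only in weak-$L^1$, changes sign, and near $c\pm a$ (where $\eta$ may pile up mass) it can have a one-sided singularity of size comparable to $\norm{\eta}/\dist$, which contributes to the Poisson integral at $z_0$ with the ``wrong'' sign and without a usable a priori bound. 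Neither of your proposed fixes closes this. Analyticity of $F_\eta$ on $(c-a,c+a)$ says nothing about cancellation from outside $I$. And the $\log$-subharmonicity of $1/\abs{F_\mu}$ is vacuous: a Herglotz function is zero-free on $\bbC_+$, so $\log\abs{F_\mu}$ is already harmonic there, hence so is $-\log\abs{F_\mu}$; you get no new inequality from that observation.

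The missing idea is the paper's change of unknown $G(z)=\sqrt{F_\mu(z)/i}$. Since $F_\mu$ maps $\bbC_+$ into $\ol{\bbC_+}$, the function $F_\mu/i$ takes values in the closed right half-plane, so $G$ takes values in the sector $\abs{\arg G}\leq\pi/4$; in particular $\Real G(\lambda+i0)\geq 0$ for \emph{all} real $\lambda$. That makes the Poisson representation of $\Real G$ an integral of a nonnegative function, so dropping $\bbR\setminus I$ is harmless, and on $I$ one has $\Real G\geq\abs{G}\cos(\pi/4)=\sqrt{\abs{F_\mu}/2}\geq\sqrt{t/2}$. Integrating the Poisson kernel over $I$ at $z_0$ (your $1/4$, relaxed to $1/(2\pi)$ in the paper) gives $\Real G(z_0)\gtrsim\sqrt t$, and squaring yields \eqref{3.9}. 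Note this single estimate replaces your entire case split: when $\mu(I)$ is large it is $\Ima F_\mu$ that makes $\abs{G}$ big on $I$, when $\mu(I)$ is small it is $\Real F_\mu$, but either way $\Real G\geq\sqrt{t/2}$ on $I$ and the Poisson integral of a nonnegative function does the rest.
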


\begin{proof} $F_\mu$ lies in weak $L^1$ and is bounded off a compact subset of $\bbR$. For $z\in\bbC_+$, let
\begin{equation} \lb{3.10}
G(z) =\sqrt{F_\mu(z)/i}
\end{equation}
Then $G$ has locally $L^1$ boundary values on $\bbR$ and is bounded off a compact set, so if $z=x+iy$,
\begin{equation} \lb{3.11}
G(z)=\f{1}{\pi}\int \f{yG(\lambda +i0)\,d\lambda}{(x-\lambda)^2 +y^2}
\end{equation}

$\arg (G)\in [-\f{\pi}{4}, \f{\pi}{4}]$, so on $\bbR$,
\begin{equation} \lb{3.12}
\Real G(\lambda+i0)\geq 0
\end{equation}
On $I$, $\arg (G)=\pm \f{\pi}{4}$, and so for $\lambda\in I$,
\begin{equation} \lb{3.13}
\Real G(\lambda+i0) \geq \sqrt{t/2}
\end{equation}

Thus, by \eqref{3.11}, \eqref{3.12}, and, \eqref{3.13},
\begin{align}
\Real G(c+a+2ia) &\geq \f{1}{\pi} \int_I \f{2a\Real G(\lambda+i0)}{(c+a-\lambda)^2 + (2a)^2}\, d\lambda \notag \\
&\geq \f{1}{\pi} \f{(2a)^2 \sqrt{t/2}}{(2a)^2+(2a)^2} \geq \f{1}{2\pi} \, \sqrt{t/2} \lb{3.14}
\end{align}
so
\begin{equation} \lb{3.15}
\abs{F_\mu(c+a+2ia)} \geq (\Real G(c+a+2ia))^2 \geq \f{t}{8\pi^2}
\end{equation}
\end{proof}

\begin{lemma} \lb{L3.3} Fix $t_0 >0$ and let
\begin{equation} \lb{3.16}
F_{t_0}(z) = \f{F(z)}{1+\f{1}{t_0} F(z)}
\end{equation}
Then, $\Ima F_{t_0}>0$ on $\bbC_+$ and
\begin{equation} \lb{3.17x}
\{x\mid \abs{F(x+i0)} >t_0\} = \biggl\{x\biggm| F_{t_0}(x+i0) > \f{t_0}{2}\biggr\}
\end{equation}
\end{lemma}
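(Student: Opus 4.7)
The plan is to recognize $F_{t_0}$ as the composition of $F$ with the M\"obius transformation $\varphi(w) = w/(1+w/t_0) = t_0 w/(t_0+w)$, and to exploit two elementary properties of $\varphi$: it preserves the upper half-plane, and it sends the circle $\{|w| = t_0\}$ to the vertical line $\{\Real \zeta = t_0/2\}$.

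For the first assertion, $\Ima F_{t_0} > 0$ on $\bbC_+$: since $\varphi$ is a linear fractional transformation with real coefficients and positive determinant $t_0^2$, it maps $\bbC_+$ to itself. Concretely, a one-line computation gives
\[
\Ima \varphi(u+iv) = \frac{t_0^2 v}{(t_0+u)^2 + v^2},
\]
which is positive whenever $v > 0$. Since $F=F_\mu$ is Herglotz (as the Stieltjes transform of a positive measure), the composition $F_{t_0} = \varphi \circ F$ also maps $\bbC_+$ into $\bbC_+$.

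For the set identity, I read the right-hand side as $\{x \mid \Real F_{t_0}(x+i0) > t_0/2\}$, since the boundary value $F_{t_0}(x+i0)$ is a priori complex. The key is the algebraic identity
\[
\Real \varphi(w) - \frac{t_0}{2} = \frac{t_0\bigl(|w|^2 - t_0^2\bigr)}{2\bigl[(t_0 + \Real w)^2 + (\Ima w)^2\bigr]},
\]
whose sign matches that of $|w|^2 - t_0^2$. Applying this to $w = F(x+i0)$, which exists for Lebesgue-a.e.\ $x \in \bbR$ by \eqref{1.2}, yields the claimed set equality after taking $\Real$ of the boundary value of $F_{t_0}$.

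There is essentially no obstacle here: the lemma is a repackaging of standard M\"obius-transformation geometry, with $\varphi$ carrying the exterior of the disk $\{|w|>t_0\}$ onto the half-plane $\{\Real\zeta>t_0/2\}$. The only point worth flagging is that $F_{t_0}$ inherits the Herglotz property, which is precisely what will permit the Poisson-integral type estimates used in Proposition~\ref{P3.2} to be transferred from $F$ to the bounded-on-$\bbC_+$ function $F_{t_0}$ in the subsequent arguments that upgrade the level-set information on $F$ to positive measure on $\fre$.
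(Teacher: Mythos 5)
Your proof is correct and takes essentially the same approach as the paper's: both identify $F_{t_0}=\varphi\circ F$ for the M\"obius transformation $\varphi(w)=t_0 w/(t_0+w)$ and read off the result from the facts that $\varphi$ preserves $\bbC_+$ and carries $\{|w|>t_0\}$ onto $\{\Real\zeta>t_0/2\}$. The paper's proof simply records these mapping properties on the extended real axis (using implicitly that $F(x+i0)$ is real a.e.\ since $\mu$ is purely singular in Section~\ref{s3}), whereas you verify them by direct algebraic computation and handle the a priori complex boundary value by taking real parts; the two readings coincide a.e.\ in the relevant context.
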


\begin{remark} $F_{t_0}$ is the Stieltjes transform of a measure associated with a rank one perturbation
(see, e.g., \cite[Sect.~11.2]{TI}), but that will play no direct role here.
\end{remark}

\begin{proof} The invertible map
\begin{equation} \lb{3.17}
H(z)=\f{z}{1+\f{z}{t_0}}
\end{equation}
maps $\bbC_+$ to $\bbC_+$ and $(t_0,\infty)\cup\{\infty\}\cup(-\infty,-t_0)$ to $(\f{t_0}{2},\infty)$.
\end{proof}

For any $x>0$, define
\begin{equation} \lb{3.18}
\Gamma_s =\{x\mid\abs{F(x+i0)} >s\}
\end{equation}

\begin{proposition}\lb{P3.4} Fix $t>0$ and let
\begin{equation} \lb{3.19}
t_0=\f{\delta}{128\pi^2}\, t
\end{equation}
Suppose
\begin{equation} \lb{3.20}
I=[c-a,c+a]\subseteq\Gamma_t
\end{equation}
and let
\begin{equation} \lb{3.21}
\ti I =[c+a,c+3a]
\end{equation}
be the touching interval of the same size as $I$. Then
\begin{equation} \lb{3.22}
\abs{\ti I\setminus\Gamma_{t_0}} \leq a\delta = \f{\delta}{2}\, \abs{I}
\end{equation}
\end{proposition}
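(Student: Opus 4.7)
The plan is to use Proposition~\ref{P3.2} at $z_0:=c+a+2ia$ to obtain $|F_\mu(z_0)|\geq t/(8\pi^2)$, and then translate this into control of $F_{t_0}$ via the Möbius map $H$ of Lemma~\ref{L3.3}. Since $|F_\mu(z_0)|/t_0\geq 16/\delta$ and $F_\mu(z_0)\in\bbC_+$, the identity
\[
F_{t_0}(z_0)-t_0\;=\;-\frac{t_0^2}{F_\mu(z_0)+t_0}
\]
together with the lower bound $|F_\mu(z_0)+t_0|\geq\sqrt{7/8}\,|F_\mu(z_0)|$ (valid because $F_\mu(z_0)\in\bbC_+$ and $|F_\mu(z_0)|\geq 16t_0/\delta$) gives $|F_{t_0}(z_0)-t_0|\leq c_0\,t_0\,\delta$ for an explicit small constant $c_0$. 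In particular $\Real F_{t_0}(z_0)\geq (1-c_0\delta)t_0$ and, crucially, $|\Ima F_{t_0}(z_0)|\leq c_0\,t_0\,\delta$.

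By Proposition~\ref{P2.3} I may assume $\mu$ is purely singular, so $F_{t_0}$ is the Stieltjes transform of a finite positive singular measure $\sigma_{t_0}$ with $\norm{\sigma_{t_0}}=\norm{\mu}$. The Herglotz representation
\[
\Ima F_{t_0}(z_0)\;=\;\int\frac{2a\,d\sigma_{t_0}(\lambda)}{(\lambda-c-a)^2+4a^2},
\]
combined with $2a/((\lambda-c-a)^2+4a^2)\geq 1/(4a)$ for $\lambda\in I\cup\tilde I$, yields $\sigma_{t_0}(I\cup\tilde I)\leq 4c_0\,a\,t_0\,\delta$. Splitting $\sigma_{t_0}=\sigma^{\mathrm{near}}+\sigma^{\mathrm{far}}$ with $\sigma^{\mathrm{near}}:=\sigma_{t_0}\!\restriction\!(I\cup\tilde I)$, the far-field $F^{\mathrm{far}}(\lambda+i0)=F_{t_0}(\lambda+i0)-F^{\mathrm{near}}(\lambda+i0)$ is real-analytic on $\intt(I\cup\tilde I)$. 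Applying the same Möbius argument to the boundary bound $|F_\mu(\lambda+i0)|\geq t$ on $I$ shows $F_{t_0}(\lambda+i0)$ lies within $c_0\,t_0\,\delta$ of $t_0$ on $I$; combined with the weak-$L^1$ bound on $F^{\mathrm{near}}$ (which forces $|F^{\mathrm{near}}|\leq t_0/8$ on most of $I$) and a Schwarz--Pick-type variation estimate on $\tilde I$, this forces $F^{\mathrm{far}}(\lambda+i0)\geq 3t_0/4$ throughout $\tilde I$. The bad set $E=\tilde I\setminus\Gamma_{t_0}$ is therefore contained in $\{\lambda\in\tilde I:|F^{\mathrm{near}}(\lambda+i0)|\geq t_0/4\}$, and Loomis's weak-$L^1$ inequality \eqref{1.4} applied to $\sigma^{\mathrm{near}}$ gives
\[
|E|\;\leq\;\frac{4C\,\sigma_{t_0}(I\cup\tilde I)}{t_0}\;\leq\; 16Cc_0\,a\,\delta\;\leq\; a\delta,
\]
with the choice $t_0=\delta t/(128\pi^2)$ calibrated exactly so that $16Cc_0\leq 1$.

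The hard part will be the Schwarz--Pick propagation: quantitatively showing that $F^{\mathrm{far}}(\lambda+i0)$ stays above $3t_0/4$ uniformly across $\tilde I$. Since $\sigma^{\mathrm{far}}$ can be a complicated singular measure on the entire real line, controlling the variation of its Stieltjes transform over an interval of length $2a$ requires using both the support condition (that $\sigma^{\mathrm{far}}$ is concentrated outside $I\cup\tilde I$) and the Poisson-kernel bound on $\sigma^{\mathrm{far}}$ inherited from $\Ima F_{t_0}(z_0)\leq c_0\,t_0\,\delta$. The fact that the conclusion is the clean bound $|E|\leq a\delta$ rather than $|E|\leq(\text{const})\,a\delta$ indicates that all the arithmetic must be tracked rather tightly, which is what motivates the specific numerical factor $1/(128\pi^2)$ appearing in the definition of $t_0$.
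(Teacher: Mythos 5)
Your approach is genuinely different from the paper's, but it does not close quantitatively, and the gap is not a matter of ``tracking arithmetic tightly'' --- it is structural.

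Let me isolate the problem. From $F_{t_0}-t_0=-t_0^2/(F_\mu+t_0)$ and $\abs{F_\mu(z_0)}\geq t/(8\pi^2)=16t_0/\delta$, the best possible bound (using $\abs{F_\mu(z_0)+t_0}\geq\abs{F_\mu(z_0)}-t_0\geq(16/\delta-1)t_0\geq15t_0/\delta$) is $\abs{F_{t_0}(z_0)-t_0}\leq t_0\delta/15$, so $c_0=1/15$ is forced by the choice of $t_0$ in \eqref{3.19}; you cannot make it smaller. This gives $\sigma_{t_0}(I\cup\ti I)\leq4a\,\Ima F_{t_0}(z_0)\leq4at_0\delta/15$. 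Now track your near--far scheme. If $F^{\mathrm{far}}\geq\beta t_0$ on $\ti I$, then on $\ti I\setminus\Gamma_{t_0}$ one has $F_{t_0}\leq t_0/2$ and hence $\abs{F^{\mathrm{near}}}\geq(\beta-\tfrac12)t_0$. Since $F^{\mathrm{far}}\leq F_{t_0}+\abs{F^{\mathrm{near}}}$ and $F_{t_0}\approx t_0$ on $I$, the propagation-by-monotonicity can only yield $\beta\leq1$, so $\beta-\tfrac12\leq\tfrac12$. Loomis \eqref{1.4}, applied correctly to $H_{\sigma^{\mathrm{near}}}=F^{\mathrm{near}}/\pi$, then gives
\begin{equation*}
\abs{E}\;\leq\;\frac{\pi C\,\sigma_{t_0}(I\cup\ti I)}{(\beta-\tfrac12)t_0}\;\geq\;\frac{\pi C\cdot\tfrac{4at_0\delta}{15}}{\tfrac12 t_0}\;=\;\frac{8\pi C}{15}\,a\delta\;>\;a\delta
\end{equation*}
even with the optimal $C=1$ and the best possible $\beta$. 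So the chain $\abs{E}\leq4C\sigma_{t_0}(I\cup\ti I)/t_0\leq16Cc_0a\delta\leq a\delta$ in your sketch both drops the factor $\pi$ from the $H$-versus-$F$ normalization and, even without that factor, has $16Cc_0=16/15>1$. No retuning of the threshold $t_0/4$ can save this: the requirement becomes $1/2-c_1\delta-s\geq4\pi C/15\approx0.84$, which is impossible since the left side is at most $1/2$. There is also a secondary issue that your intermediate claim ``$\abs{F^{\mathrm{near}}}\leq t_0/8$ on most of $I$'' is not implied by the Loomis bound when $\delta$ is close to $1$: the estimate on $\abs{\{\abs{F^{\mathrm{near}}}>t_0/8\}}$ is about $32\pi a\delta/15\approx6.7a\delta$, which exceeds $\abs{I}=2a$ for $\delta\gtrsim0.3$.

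The paper avoids all of this by never decomposing $\sigma_{t_0}$. It observes, via Lemma~\ref{L3.3} and the singularity of $\mu$, that $\pi\,\chi_{\bbR\setminus\Gamma_{t_0}}(x)=\arg\bigl(F_{t_0}(x+i0)-\tfrac{t_0}{2}\bigr)$ almost everywhere on $\bbR$, so the harmonic extension of $\chi_{\bbR\setminus\Gamma_{t_0}}$ at $z_0$ is \emph{exactly} $\tfrac1\pi\arg\bigl(F_{t_0}(z_0)-\tfrac{t_0}{2}\bigr)$. The Möbius identity then controls this argument directly (via $\arg(1+w)\leq\tfrac{\pi}{2}\abs{w}$), and a single Poisson-kernel lower bound on $\ti I$ converts that into \eqref{3.22}, with clean constants. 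Your route pays an avoidable penalty by passing through $\Ima F_{t_0}(z_0)$ (which controls only the mass of $\sigma_{t_0}$, not $\abs{\bbR\setminus\Gamma_{t_0}}$ directly) and then through Loomis a second time. If you want to salvage a near--far argument, you would have to replace the role of $\Ima F_{t_0}(z_0)$ by the argument of $F_{t_0}(z_0)-t_0/2$, at which point you have reconstructed the paper's proof.
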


\begin{proof} By the lemma for $x$ real,
\begin{equation} \lb{3.23}
\chi_{\Gamma_{t_0}}(x) = 1-\f{1}{\pi} \, \arg\biggl( F_{t_0} (x+i0)-\f{t_0}{2}\biggr)
\end{equation}
which is the boundary value of a bounded harmonic function.

Let
\begin{equation} \lb{3.24}
z_0 =c+a+2ia
\end{equation}
Then
\begin{align}
\arg\biggl(F_{t_0}(z_0)-\f{t_0}{2}\biggr)
&= \arg\biggl( \f{F(z_0) - \f{t_0}{2} - \f{F(z_0)}{2}}{1+\f{1}{t_0}F(z_0)}\biggr) \notag \\
&= \arg\biggl( \f{\f{F(z_0)}{t_0}-1}{\f{F(z_0)}{t_0}+1}\biggr) =
\arg\biggl( 1-\f{2}{\f{F(z_0)}{t_0}+1}\biggr) \lb{3.25}
\end{align}

By Proposition~\ref{P3.2},
\begin{equation} \lb{3.26}
\biggl| \f{F(z_0)}{t_0}\biggr| \geq \f{t}{8\pi^2t_0} = \f{16}{\delta} \geq 16
\end{equation}
since $\delta \leq 1$. Thus,
\begin{equation} \lb{3.27}
\biggl| \f{2}{\f{F(z_0)}{t_0}+1}\biggr| \leq \f{2}{\abs{\f{F(z_0)}{t_0}}-1} \leq \f{2}{15} < 1
\end{equation}

If $\abs{w}\leq 1$ for $w\in\bbC$, then
\begin{equation} \lb{3.28}
\arg(1+w) \leq \arcsin (\abs{w}) \leq \f{\pi}{2}\, \abs{w}
\end{equation}
($\sin(y) \geq \f{2y}{\pi}$ for $y\in [0,\f{\pi}{2}]$ implies for
$x\in [0,1]$, $\arcsin x \leq \f{\pi}{2}x$). By \eqref{3.25},
\begin{equation} \lb{3.29}
\arg \biggl(F_{t_0}(z_0) - \f{t_0}{2}\biggr) \leq \f{8\pi^3 t_0}{t-8\pi^2 t_0}
\end{equation}

Thus, if $\chi_{\Gamma_{t_0}}(z)$ is the harmonic function whose boundary value is $\chi_{\Gamma_{t_0}}(x)$, we find, by
\eqref{3.23}, that
\begin{equation} \lb{3.30}
\pi (1-\chi_{\Gamma_{t_0}}(z_0)) \leq \f{8\pi^3 t_0}{t-8\pi^2 t_0}
\end{equation}

By a Poisson formula with $z_0=x_0+iy_0$ as in \eqref{3.24},
\begin{align}
\pi(1-\chi_{\Gamma_{t_0}}(z_0)) &= \int_{\bbR\setminus\Gamma_{t_0}}
\f{y_0\, d\lambda}{(\lambda-x_0)^2 + y_0^2} \lb{3.31} \\
&\geq \f{1}{2}\, \f{\abs{\ti I\setminus\Gamma_{t_0}}}{\abs{I}} \lb{3.32}
\end{align}
since on $\ti I$, the minimum of $y_0/((\lambda-x_0)^2 + y_0^2)$ is
$1/(2\abs{I})$.

Thus, by \eqref{3.30} and \eqref{3.32},
\begin{equation} \lb{3.33}
\abs{\ti I\setminus \Gamma_{t_0}} \leq \f{16\pi^3 t_0}{t-8\pi^2 t_0}\, \abs{I}
\end{equation}

Since $\f{8\pi^2 t_0}{t}\leq \f{1}{16}$,
\[
\f{\f{16\pi^3 t_0}{t}}{1-\f{8\pi^2 t_0}{t}} \leq \f{256\pi^3 t_0}{15\, t} = \f{4\pi}{15} \, \f{\delta}{2} \leq \f{\delta}{2}
\]
and \eqref{3.33} implies \eqref{3.22}.
\end{proof}

\begin{proposition} \lb{P3.5} Under the notation of Proposition~\ref{P3.4}, let
\begin{equation} \lb{3.34}
I^\sharp = [c-3a, c+3a]
\end{equation}
and suppose
\begin{equation} \lb{3.35}
\fre\cap I\neq \emptyset
\end{equation}
and
\begin{equation} \lb{3.36}
a\leq \diam(\fre)
\end{equation}
Then
\begin{equation} \lb{3.37}
\abs{\Gamma_{t_0} \cap\fre\cap I^\sharp} \geq \f{\delta}{2}\, \abs{I}
\end{equation}
\end{proposition}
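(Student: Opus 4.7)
The plan is to combine Proposition~\ref{P3.4}, applied separately to each of the two touching intervals of $I$, with the homogeneity of $\fre$. Proposition~\ref{P3.4} controls how much of the right touching interval $\tilde I$ escapes $\Gamma_{t_0}$; by the same argument reflected about the midpoint $c$, the left touching interval should satisfy an analogous bound. The homogeneity of $\fre$ will then supply enough $\fre$-mass inside $I^\sharp$ to overcome the exceptional set.

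First, I note that the proof of Proposition~\ref{P3.4} is symmetric under reflection about $c$: Proposition~\ref{P3.2} holds equally at $z_0' = c-a+2ia$ (by symmetry of the Poisson integral over $I$), and the subsequent harmonic-function/Poisson-kernel estimate at $z_0'$ yields $|\tilde I' \setminus \Gamma_{t_0}| \leq \delta a$ for the left touching interval $\tilde I' = [c-3a,c-a]$. Since $\delta/(128\pi^2) < 1$, we have $t_0 < t$, hence $I \subseteq \Gamma_t \subseteq \Gamma_{t_0}$. Combining,
\[
|I^\sharp \setminus \Gamma_{t_0}| \leq |\tilde I' \setminus \Gamma_{t_0}| + |I \setminus \Gamma_{t_0}| + |\tilde I \setminus \Gamma_{t_0}| \leq 2\delta a.
\]
Now using \eqref{3.35}, pick $x_0 \in \fre \cap I$; since $x_0 \in [c-a,c+a]$, the interval $(x_0 - 2a, x_0 + 2a)$ lies in $I^\sharp$, and homogeneity \eqref{1.14} at radius $2a$ gives $|\fre \cap I^\sharp| \geq 4\delta a$. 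Subtracting,
\[
|\fre \cap I^\sharp \cap \Gamma_{t_0}| \geq 4\delta a - 2\delta a = 2\delta a \geq \tfrac{\delta}{2}\,|I|.
\]

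The main technical delicacy will be the application of homogeneity at radius $2a$: the definition \eqref{1.14} only guarantees the estimate for $0 < a' < \diam(\fre)$, while \eqref{3.36} permits $a$ up to $\diam(\fre)$. In the borderline regime $\diam(\fre)/2 \leq a \leq \diam(\fre)$ the argument above yields only $|\fre \cap I^\sharp| = |\fre| \geq 2\delta \diam(\fre) \geq 2\delta a$ (via the $a' \to \diam(\fre)^-$ limit), which is insufficient to close the subtraction. I expect this degenerate case to be handled either by a separate direct estimate or by effectively restricting to $a \leq \diam(\fre)/2$ in the Vitali covering step that follows this proposition.
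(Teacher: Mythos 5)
Your approach differs from the paper's in a way that creates the exact gap you identify at the end. You apply homogeneity at radius $2a$ around $x_0$, which requires $2a < \diam(\fre)$; but \eqref{3.36} only guarantees $a \leq \diam(\fre)$, so the estimate fails on the whole range $a \in (\diam(\fre)/2, \diam(\fre)]$. The fixes you speculate about (restricting the radii in the Vitali step, or a separate direct estimate) would require modifying the surrounding argument, which you have not done, so the gap is real as written.

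The paper avoids this entirely by using a smaller radius and only one touching interval. Pick $x_0 \in \fre \cap I$, and suppose $x_0 \geq c$ (otherwise reflect and use the left touching interval). Apply homogeneity at radius $a$, getting $|\fre \cap (x_0-a, x_0+a)| \geq 2\delta a = \delta|I|$. The key geometric observation you missed is that $(x_0-a, x_0+a) \subset I \cup \ti I$ when $x_0 \in [c,c+a]$, so the only exceptional set one needs to subtract is $\ti I \setminus \Gamma_{t_0}$, controlled by Proposition~\ref{P3.4} to be at most $\f{\delta}{2}|I|$. Then $|\Gamma_{t_0} \cap \fre \cap I^\sharp| \geq \delta|I| - \f{\delta}{2}|I| = \f{\delta}{2}|I|$. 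Since this uses radius $a$ rather than $2a$, it is compatible with \eqref{3.36} (up to the harmless $\leq$ vs.\ $<$ distinction, resolved by letting $a' \uparrow a$), and it only needs the bound from Proposition~\ref{P3.4} on the single touching interval on $x_0$'s side, so no reflected version is required either. Your reflection argument for the left interval is correct, but it is both unnecessary and the source of the over-large radius; aligning the homogeneity radius with the size of $I$ itself is the step you should adopt.
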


\begin{proof} Pick $x_0\in\fre\cap I$. Suppose $x_0\geq c$. If not, we pick $\ti I$ to be the third of
$I^\sharp$ below $I$ instead of the choice here. By homogeneity,
\begin{equation} \lb{3.38}
\abs{\fre\cap (x_0-a,x_0+a)} \geq 2a\delta =\delta\abs{I}
\end{equation}
and the intersection lies in $I\cup\ti I$. Thus,
\begin{equation} \lb{3.39}
\abs{\Gamma_{t_0}\cap\fre\cap I^\sharp} \geq \abs{\fre\cap (x_0-a,x_0+a)} -
\abs{(I\cup\ti I)\setminus\Gamma_{t_0}}
\end{equation}
Since $I\subset\Gamma_t\subset\Gamma_{t_0}$,
\begin{equation} \lb{3.40}
\abs{(I\cup\ti I)\setminus\Gamma_{t_0}} = \abs{\ti I\setminus\Gamma_{t_0}} \leq \f{\delta}{2}\, \abs{I}
\end{equation}
by \eqref{3.22}. \eqref{3.38} and \eqref{3.39} imply \eqref{3.37}.
\end{proof}

\begin{proof}[Proof of Theorem~\ref{T1.4}] Suppose $\mu\neq 0$. On $\bbR\setminus A$, $F_\mu(x+i0)$ is
continuous and real, so $\{x\mid \abs{F_\mu(x+i0)} >t\}$ is open, and so a countable union of maximal
disjoint open intervals.

Let $I=[c-a,c+a]$ be the closure of any such interval. On $\bbR\setminus A$, $F_\mu(x)$ has
\begin{equation} \lb{3.41}
F'_\mu(x) = \int \f{d\mu(x)}{(y-x)^2} >0
\end{equation}
If $F_\mu >t$ on $I$, $c+a$ must be in $A$ or else $F_\mu(c+a)<\infty$ and $F_\mu(c+a+\veps)\in\Gamma_t$ for
$\veps$ small (so $I$ is not maximal). Similarly, if $F_\mu <-t$ on $I$, $c-a\in A$. Thus, $I\cap A\neq
\emptyset$, so $I\cap\fre\neq\emptyset$.

Let
\begin{equation} \lb{3.42}
T=\f{\pi C\norm{\mu}}{\diam(\fre)}
\end{equation}
where $C$ is the constant in \eqref{1.4}. Then for $t>T$, $\abs{\Gamma_t}\leq \diam(\fre)$, so $a\leq
\diam(\fre)$. Thus, by Proposition~\ref{P3.5},
\begin{equation} \lb{3.43}
\abs{\Gamma_{t_0}\cap\fre\cap I^\sharp} \geq \f{\delta}{2}\, \abs{I}
\end{equation}

Clearly, the $I$'s and so the $(I^\sharp)^\intt$'s are an open cover of $\Gamma_t\setminus A$. Thus, by
the Vitali covering theorem (see Rudin \cite[Lemma~7.3]{Rudin}), we can find a subset of mutually disjoint
$I^\sharp$'s, call them $\{I_j^\sharp\}$, so that
\begin{equation} \lb{3.44}
\abs{\Gamma_t} \leq 4 \sum_j\, \abs{I_j^\sharp} \leq 12\sum_j\, \abs{I_j}
\end{equation}

By the disjointness, with $t_0$ given by \eqref{3.19},
\begin{alignat*}{2}
\abs{\Gamma_{t_0}\cap\fre} &\geq \sum_j\, \abs{I_j^\sharp\cap\Gamma_{t_0}\cap\fre} \notag \\
&\geq \f{\delta}{2}\sum_j\, \abs{I_j} \qquad && \text{(by \eqref{3.37})} \\
&\geq \f{\delta}{24}\, \abs{\Gamma_t} \qquad && \text{(by \eqref{3.44})}
\end{alignat*}
Thus,
\[
\liminf_{t\to\infty}\, t_0 \abs{\Gamma_{t_0}\cap\fre} \geq \liminf_{t\to\infty}\, \f{\delta}{24}\,
\f{\delta}{128\pi^2}\, t\abs{\Gamma_t}
\]
Therefore, by \eqref{1.8} and \eqref{3.1},
\[
\liminf_{t\to\infty}\, t \abs{\{x\in\fre\mid\abs{H_\mu(x)} >t\}} \geq \f{\delta^2}{3072\pi^2}\,
\f{2(\mu(A))}{\pi}
\]
which is \eqref{1.17}/\eqref{1.18}.
\end{proof}

\section{Weakly Homogeneous Sets} \lb{s4}

\begin{proof}[Proof of Theorem~\ref{T1.5}] For $x_0\in\fre$ and $\veps >0$, write
\begin{equation} \lb{4.1}
\mu = \mu_1 + \mu_2 + \mu_3
\end{equation}
with $\mu_1 =\mu\restriction\{x_0\}$, $\mu_2 = \mu\restriction [(x_0-\veps, x_0 + \veps)\setminus
\{x_0\}]$, $\mu_3 =\mu\restriction \bbR\setminus (x_0-\veps, x_0+\veps)$, and by \eqref{1.7}, note
\begin{equation} \lb{4.2}
\begin{split}
\abs{\{x\in\fre;\, &\abs{x-x_0} < \tfrac{\veps}{2}\mid \abs{H_{\mu_1}(x)} > 3t\}}
 \leq  \abs{\{x\in\fre\mid \abs{H_\mu(x)} > t\}} \\
& + \abs{\{x \mid \abs{H_{\mu_2}(x)}>t\}}
+ \abs{\{x;\, \abs{x-x_0} < \tfrac{\veps}{2}\mid \abs{H_{\mu_3}(x)} >t\}}
\end{split}
\end{equation}
By hypothesis, the first term on the right of \eqref{4.2} is $o(1/t)$. Since
$\abs{H_{\mu_3}(x)} \leq 2/\veps$, the third term is $o(1/t)$. By \eqref{1.4},
the second term is bounded by $C\mu ((x_0-\veps, x_0+\veps)\setminus
\{x_0\})/t$.

So long as $t > \f{2\mu(\{x_0\})}{3\pi\veps}$, the left of \eqref{4.2} is
$\abs{\fre\cap (x_0 - \f{2\mu(\{x_0\})}{3\pi t}, x_0 + \f{2\mu(\{x_0\})}{3\pi
t})}$. Thus, if
\begin{equation} \lb{4.3}
C(x_0) = \limsup_{s\downarrow 0}\, (2s)^{-1}\abs{\fre\cap (x_0-s, x_0+s)}
\end{equation}
\eqref{4.2} implies that
\begin{equation} \lb{4.4}
\f{4C(x_0)\mu(\{x_0\})}{3\pi} \leq C\mu ((x_0-\veps, x_0+\veps)\setminus\{x_0\})
\end{equation}
for any $\veps$. Since $\cap [(x_0-\f{1}{m}, x_0+\f{1}{m})\setminus\{x_0\}]=\emptyset$, the right
side of \eqref{4.4} goes to zero as $\veps\downarrow 0$, and we conclude that $\mu(\{x_0\})=0$.
\end{proof}

To prove Theorem~\ref{T1.6}, we need to describe some sets connected with the
Cantor set. Let $K_1$ be the two connected closed sets $K_{1,1},K_{1,2}$
obtained from $[0,1]$ by removing the middle third. At level $n$, we have $2^n$
intervals $\{K_{n,j}\}_{j=1}^{2^n}$, each with $\abs{K_{n,j}}=3^{-n}$ so
$\abs{K_n}=(\f{2}{3})^n$. The Cantor set, of course, is $K_\infty =\cap K_n$.
The Cantor measure is determined by
\begin{equation} \lb{4.5}
\mu(K_{n,j})=\f{1}{2^n}
\end{equation}

We order $\calI=\{(n,j)\mid n=1,2,\dots,\, j=1,2,\dots, 2^n\}$ with lexigraphic order and use $(n,j+1)$
for the obvious pair if $j<2^n$ and to be $(n+1,1)$ if $j=2^n$. Similarly, $(n,j-1)$ is $(n-1, 2^{n-1})$
if $j=1$.

Let $E_1$ be the middle closed third of $[0,1]\setminus K_1$, so $\abs{E_1}=1/9$. Let $E_2$ be the two
middle thirds of the two gaps in $K_1\setminus K_2$. $E_m$ has $2^{m-1}$ closed intervals of size
$1/3^{m+1}$. There is a unique affine order preserving map of $[0,1]$ to $K_{n,j}$.  Let $E_{n,j,m}$
be the image of $E_m$ under this map, so $E_{n,j,m}$ has $2^{m-1}$ intervals, each of size $1/3^{n+m+1}$,
that is,
\begin{equation} \lb{4.5a}
\abs{E_{n,j,m}}=2^{m-1}/3^{n+m+1}
\end{equation}

We want to pick a positive integer $m(n,j)$ for each $(n,j)\in\calI$ so that
\begin{equation} \lb{4.6}
m(n,j+1) >m(n,j)
\end{equation}
and we define
\begin{equation} \lb{4.7}
k(n,j)=n+m(n,j)
\end{equation}

Given such a choice, we define
\begin{equation} \lb{4.8}
\fre=K_\infty \cup\bigcup_{n,j\in\calI} E_{n,j,m(n,j)}
\end{equation}
Our goal will be to prove $\fre$ is always weakly homogeneous, and that if $m(n,j)$ grows fast enough, then
$H_\mu\restriction\fre$ is in $L_{w;0}^1$.

\begin{lemma}\lb{L4.1} For any choice of $m(n,j)$, $\fre$ is weakly homogeneous. Indeed, for any $x_0\in\fre$,
\begin{equation} \lb{4.9}
\limsup_{\delta\downarrow 0} (2\delta)^{-1} \abs{\fre\cap (x_0-\delta, x_0+\delta)} \geq \f{1}{10}
\end{equation}
\end{lemma}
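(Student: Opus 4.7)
My plan is to prove \eqref{4.9} by splitting into two cases depending on whether $x_0$ lies in one of the component intervals of $\bigcup_{n,j} E_{n,j,m(n,j)}$ or in $K_\infty$. The first case will be immediate: if $x_0$ belongs to a closed component interval $I \subset \fre$ of some $E_{n,j,m(n,j)}$, then for $\delta$ smaller than $\abs{I}$, the set $\fre \cap (x_0 - \delta, x_0 + \delta)$ contains at least $\delta$ units of Lebesgue measure (a full one-sided sub-interval of $I$, the two-sided one if $x_0$ is interior), giving density $\geq 1/2$.

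For the main case, $x_0 \in K_\infty$, I would introduce a sequence of good scales. For each $n \geq 1$, let $j_n$ be the unique index with $x_0 \in K_{n,j_n}$, and set $m_n := m(n, j_n)$ and $L_n := n + m_n$. Since $(n+1, j_{n+1})$ strictly follows $(n, j_n)$ in lex order, \eqref{4.6} gives $m_{n+1} > m_n$, hence $L_n$ is strictly increasing and tends to $\infty$. The key observation will be that the gap $G_{L_n}$ defined as the middle third of $K_{L_n - 1, j_{L_n - 1}}$—a closed interval of length $3^{-L_n}$—is simultaneously a level-$m_n$ gap inside $K_{n, j_n}$, because $L_n - n = m_n$. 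Consequently, the middle third of $G_{L_n}$, a closed interval of length $\ell_n := 3^{-L_n - 1}$, is one of the $2^{m_n - 1}$ component pieces of $E_{n, j_n, m_n} \subset \fre$.

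A direct coordinate computation then finishes the argument. Writing $K_{L_n - 1, j_{L_n - 1}} = [a_n, a_n + 3 \cdot 3^{-L_n}]$, the point $x_0$ lies in one of the two end-thirds (the Cantor children), while the $E$-piece above occupies the middle ninth $[a_n + 4\ell_n, a_n + 5\ell_n]$. In the worst configuration, $x_0$ sits at the endpoint of its child interval farthest from the $E$-piece, at distance $4\ell_n$ from its near edge and $5\ell_n$ from its far edge. Setting $\delta_n := 5\ell_n$ then guarantees that $(x_0 - \delta_n, x_0 + \delta_n)$ contains the full piece, and
\[
\frac{\abs{\fre \cap (x_0 - \delta_n, x_0 + \delta_n)}}{2\delta_n} \geq \frac{\ell_n}{10\,\ell_n} = \frac{1}{10}.
\]
Since $\delta_n \to 0$, this proves \eqref{4.9}. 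The main obstacle to foresee is the bookkeeping identification of the absolute Cantor level $L_n$ with the relative subdivision level $m_n$ appearing in the definition of $E_{n, j_n, m_n}$; once that is done the geometry is forced, and the constant $1/10$ emerges as the exact worst-case ratio of the piece's length to the smallest symmetric interval around $x_0$ that contains it.
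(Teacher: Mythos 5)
Your proposal is correct and follows essentially the same route as the paper: split into the case $x_0$ in an $E$-piece (density $\geq 1/2$ for small $\delta$) and $x_0 \in K_\infty$, where one locates the piece of $\tilde E_{n,j_n}$ of length $3^{-k_n-1}$ in the touching absolute-level-$k_n$ gap and takes $\delta_n = \frac{5}{3}\,3^{-k_n}$ to get density $\geq 1/10$. Your $L_n$, $\ell_n$, and $\delta_n = 5\ell_n$ match the paper's $k_n$, $3^{-k_n-1}$, and $\frac{5}{3}3^{-k_n}$ exactly, and the coordinate worst-case computation is the same geometry made explicit.
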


\begin{proof} Let
\begin{equation} \lb{4.10}
\ti E_{n,j}=E_{n,j,m(n,j)}
\end{equation}
If $x_0\in\ti E_{n,j}$, which is a closed interval, for all small $\delta$,
$(2\delta)^{-1} \abs{\ti E_{n,j} \cap (x_0-\delta, x_0+\delta)} =\f12$ or $1$,
depending on whether $x_0$ is a boundary or an interior point. So \eqref{4.9}
is certainly true.

Thus, we need only consider $x_0\in K_\infty$. Fix $x_0\in K_\infty$. For each
$n$, $x_0\in K_n$, and so in $K_{n,j_n}$ for some $j_n$. Let $k_n\equiv
k(n,j_n)$. On level $k_n$, $x_0$ is contained in some interval, $K_{k_n,\ell}$
of size $3^{-k_n}$ and on one side or the other, there is an interval of size
$3^{-k_n-1}$ in $\ti E_{n,j_n}$ in a touching gap. Let
\begin{equation} \lb{4.11}
\delta_n = \f{5}{3}\, 3^{-k_n}
\end{equation}
Then $(x_0-\delta_n, x_0+\delta_n)$ contains this interval in $\ti E_{n,j_n}$.
Thus,
\begin{equation} \lb{4.12}
(2\delta_n)^{-1} \abs{\fre\cap (x_0-\delta_n, x_0 + \delta_n)} \geq
\f{3^{-k_n-1}}{2\delta_n} = \f{1}{10}
\end{equation}
Since $\delta_n\to 0$ as $n\to\infty$, \eqref{4.9} holds.
\end{proof}

For each $(n,j)$, we will want to define
\begin{equation} \lb{4.13}
\mu_{n,j} = \mu\restriction K_{n,j}\cup K_{n,j-1} \qquad
\ti\mu_{n,j}=\mu - \mu_{n,j}
\end{equation}
that is, single out the part of the Cantor measure near $K_{n,j}$, and so near $E_{n,j}$. We define
\begin{equation} \lb{4.14}
F_{n,j} = F_{\mu_{n,j}} \qquad \ti F_{n,j} = F_{\ti\mu_{n,j}}
\end{equation}

\begin{lemma} \lb{L4.2}
On $\cup_{(\ti n,\ti j)\leq (n,j)} \ti E_{\ti n,\ti j}$, we have
\begin{equation} \lb{4.15}
\abs{\ti F_{n,j}} \leq 3^{k(n,j-1)}
\end{equation}
\end{lemma}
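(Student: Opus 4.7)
The plan is to estimate $\abs{\ti F_{n,j}(x)}$ by the pointwise bound
\[
\abs{\ti F_{n,j}(x)} \leq \int \f{d\ti\mu_{n,j}(y)}{\abs{y-x}} \leq \f{\norm{\ti\mu_{n,j}}}{\dist(x,\supp \ti\mu_{n,j})} \leq \f{1}{\dist(x,\supp \ti\mu_{n,j})},
\]
where the last step uses $\norm{\mu}=1$. The problem thus reduces to proving
\[
\dist(x, \supp \ti\mu_{n,j}) \geq 3^{-k(n,j-1)}
\]
for every $x \in \ti E_{\ti n, \ti j}$ with $(\ti n, \ti j) \leq (n,j)$.

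Two ingredients drive the argument. First, a geometric fact: each component of $\ti E_{\ti n, \ti j} = E_{\ti n, \ti j, m(\ti n, \ti j)}$ is the middle third of a Cantor-construction gap of length $3^{-k(\ti n, \ti j)}$ inside $K_{\ti n, \ti j}$, whose two endpoints belong to $K_\infty$. A direct calculation then yields
\[
\dist(x, K_\infty) \geq \tfrac{1}{2}\bigl(3^{-k(\ti n, \ti j)} - 3^{-k(\ti n, \ti j)-1}\bigr) = 3^{-k(\ti n, \ti j)-1}.
\]
Second, a monotonicity fact: condition \eqref{4.6} (interpreted across $n$-boundaries by the successor rule $(n, 2^n+1)=(n+1,1)$) forces $m$, and hence $k = n+m$, to increase by at least $1$ at every lexicographic step.

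If $(\ti n, \ti j) < (n, j-1)$ strictly in lex order, then $k(\ti n, \ti j)+1 \leq k(n,j-1)$ by monotonicity, and since $\supp \ti\mu_{n,j} \subset K_\infty$,
\[
\dist(x, \supp \ti\mu_{n,j}) \geq \dist(x, K_\infty) \geq 3^{-k(\ti n, \ti j)-1} \geq 3^{-k(n,j-1)},
\]
as desired. The main obstacle is the two boundary cases $(\ti n, \ti j) = (n, j-1)$ and $(\ti n, \ti j) = (n, j)$, where the geometric estimate degrades because $k(\ti n, \ti j) \geq k(n, j-1)$. The fix is to note that in these cases $x \in K_{\ti n, \ti j} \subseteq K_{n, j-1} \cup K_{n, j}$ already sits in the region on which $\ti\mu_{n,j}$ vanishes, and every point of $\supp \ti\mu_{n,j}$ lives in a Cantor piece separated from $K_{\ti n, \ti j}$ by a Cantor gap of length at least $3^{-n}$. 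Hence $\dist(x, \supp \ti\mu_{n,j}) \geq 3^{-n} \geq 3^{-k(n,j-1)}$, where the last inequality uses $k(n, j-1) \geq n$, a consequence of $m \geq 1$.
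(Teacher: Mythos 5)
Your proof is correct and follows essentially the same route as the paper: bound $\abs{\ti F_{n,j}(x)}$ by $\dist(x,\supp\ti\mu_{n,j})^{-1}$, use the geometric fact $\dist(\ti E_{\ti n,\ti j},K_\infty)=3^{-k(\ti n,\ti j)-1}$ together with monotonicity of $k$ to handle $(\ti n,\ti j)<(n,j-1)$, and for the two remaining indices $(n,j-1),(n,j)$ fall back on the coarser $3^{-n}$ separation between $K_{n,j-1}\cup K_{n,j}$ and the rest of $K_\infty$. If anything, your way of justifying the monotonicity step (``$k$ increases by at least $1$ per lexicographic step,'' valid across $n$-boundaries by the successor convention) is a cleaner statement than the paper's ``$m(\ti n,\ti j)+1<m(n,j-1)$,'' which need not hold with strict inequality but whose conclusion $k(\ti n,\ti j)+1\leq k(n,j-1)$ is all that is used.
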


\begin{proof} Since $\norm{\ti\mu_{n,j}}\leq 1$, we have
\begin{equation} \lb{4.16}
\abs{\ti F_{n,j}(x)} \leq \dist (x,K_\infty\setminus K_{n,j-1}\cup K_{n,j}))^{-1}
\end{equation}

By construction,
\begin{equation} \lb{4.17}
\dist(\ti E_{\ti n, \ti j}, K_\infty) = 3^{-k(\ti n, \ti j)-1}
\end{equation}
so if $(\ti n,\ti j)< (n, j-1)$, then for $x\in\ti E_{\ti n, \ti j}$,
\begin{equation} \lb{4.18}
\abs{\ti F_{n,j}(x)} \leq 3^{k(\ti n, \ti j)+1} \leq 3^{k(n,j-1)}
\end{equation}
since $m(\ti n,\ti j)+1 < m(n,j-1)$ implies $k(\ti n, \ti j)+1\leq k(n,j-1)$.

On the other hand, since we have removed $K_{n,j-1}\cup K_{n,j}$,
\begin{equation} \lb{4.19}
\dist(\ti E_{n,j}\cup \ti E_{n,j-1}, K_\infty\setminus (K_{n,j}\cup K_{n,j-1}))
\geq 3^{-n}
\end{equation}
Thus, for $x$ in $\ti E_{n,j}\cup \ti E_{n,j-1}$, we have that
\begin{equation} \lb{4.20}
\abs{\ti F_{n,j}(x)} \leq 3^n \leq 3^{k(n,j-1)}
\end{equation}
proving \eqref{4.15} on the claimed set.
\end{proof}

\begin{proof}[Proof of Theorem~\ref{T1.6}] We construct $\fre$ by using the above construction where
$m(n,j)$ is picked inductively so that
\begin{equation} \lb{4.21}
k(n,j+1) = 3k(n,j)
\end{equation}
By Lemma~\ref{L4.1}, $\fre$ is weakly homogeneous.

Let
\begin{equation} \lb{4.21a}
3^{k(n,j-1)} < t \leq 3^{k(n,j)}
\end{equation}
Since $F_\mu = F_{n,j}+\ti F_{n,j}$, by \eqref{1.7},
\begin{equation} \lb{4.22}
\begin{split}
2t\abs{\{x\in\fre\mid \abs{F_\mu(x)} \geq 2t\}} &\leq 2t \abs{\{x\mid\abs{F_{n,j}(x)} \geq t\}}\\
& \qquad + 2t \abs{\{x\in\fre\mid \abs{\ti F_{n,j}(x)} \geq t\}}
\end{split}
\end{equation}

By Boole's equality \eqref{1.5}, the first term on the right side of
\eqref{4.22} is bounded by
\begin{equation} \lb{4.23}
4(\mu_{n,j-1}(\bbR)+\mu_{n,j}(\bbR)) \leq 4[2^{-n} + 2\cdot2^{-n}] = 12\cdot
2^{-n}
\end{equation}
(where we need the $2\cdot 2^{-n}$ if $j=1$).

By Lemma~\ref{L4.2}, the second term is bounded by
\begin{equation} \lb{4.24}
2\cdot3^{k(n,j)} \sum_{(\ti n, \ti j) \geq (n,j+1)}\, \abs{E_{\ti n, \ti j}}
\end{equation}
By \eqref{4.5a},
\begin{equation} \lb{4.25}
\abs{\ti E_{n,j}} = \f{1}{2^{n+1}\, 3}\, \biggl( \f{2}{3}\biggr)^{k(n,j)}
\end{equation}
so using
\begin{gather}
\sum_{\ell=\ell_0}^\infty \, \biggl( \f{2}{3}\biggr)^\ell = 3\biggl( \f{2}{3}\biggr)^{\ell_0} \lb{4.26} \\
\text{\eqref{4.24}} \leq 3^{k(n,j)}\, 2^{-n} \biggl( \f{2}{3}\biggr)^{k(n,j+1)} \lb{4.26a}
\end{gather}

By \eqref{4.21} and $(\f32)^3 = \f{27}{8} >3$, we see
\begin{equation} \lb{4.27}
\text{\eqref{4.24}} \leq 2^{-n}
\end{equation}
Thus, if $t$ obeys \eqref{4.21a}, then by \eqref{4.22}, \eqref{4.23}, and
\eqref{4.27},
\begin{equation} \lb{4.28}
2t\abs{\{x\in\fre \mid \abs{F_\mu(x)} \geq t\}} \leq 13\cdot2^{-n}
\end{equation}
Since $n\to\infty$ as $t\to\infty$, we see $F_\mu\restriction\fre\in L_{w;0}^1$.
\end{proof}

\section{Non-uniformly Homogeneous Sets} \lb{s5}

Our goal in this section is to prove Theorem~\ref{T1.8} and then also Theorem~\ref{T1.7}. For any Borel set
$\fre$, define
\begin{equation} \lb{5.1}
\fre_n = \biggl\{x\in\fre\biggm| \forall\, a < \f{1}{n}\,,\, \abs{(x-a,x+a)\cap\fre} \geq \f{2a}{n}\biggr\}
\end{equation}

\begin{proposition} \lb{P5.1}
Let $\mu$ be a measure with $\mu(\bbR\setminus\fre_n)=0$. Suppose
$H_\mu\restriction\fre\in L_{w;0}^1$. Then $\mu_\s=0$.
\end{proposition}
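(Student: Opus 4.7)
The plan is to transport the strategy of Section~\ref{s2} combined with an $\fre_n$-version of Theorem~\ref{T1.4}. First, by Proposition~\ref{P2.1}, we may assume $\mu$ is purely singular (the conclusion $\mu_\s=0$ then becomes $\mu=0$). Since $\mu$ is then concentrated on a Lebesgue-null set sitting inside $\fre_n$, inner regularity produces a Borel set $A_\infty\subset\fre_n$ with $\abs{A_\infty}=0$ and $\mu(\bbR\setminus A_\infty)=0$, together with closed subsets $A_k\subset A_\infty$ satisfying $\mu(A_\infty\setminus A_k)\to 0$. Each $A_k$ is then closed, of Lebesgue measure zero, and contained in $\fre_n$.

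The core step is the following $\fre_n$-analog of Theorem~\ref{T1.4}: there is a constant $C_1(n)$ such that for every measure $\nu$ with $\nu(\bbR\setminus A)=0$ for some closed $A\subset\fre_n$ with $\abs{A}=0$,
\begin{equation*}
\nu(\bbR) \leq C_1(n)\,\liminf_{t\to\infty}\, t\,\abs{\{x\in\fre : \abs{H_\nu(x)}\geq t\}}.
\end{equation*}
I would prove this by re-running Section~\ref{s3} with $\delta$ replaced by $1/n$. The Loomis bound \eqref{1.4} yields $\abs{\Gamma_t}\leq C\norm{\nu}/t$, so for all sufficiently large $t$ every maximal open interval $I=(c-a,c+a)$ of $\Gamma_t\setminus A$ has $a<1/n$. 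Just as in Section~\ref{s3} the endpoints of $I$ lie in $A\subset\fre_n$, so any $x_0\in A\cap I$ satisfies
\begin{equation*}
\abs{\fre\cap(x_0-a,x_0+a)} \geq 2a/n
\end{equation*}
by the very definition of $\fre_n$; this replaces \eqref{3.38}. Proposition~\ref{P3.2} is unaffected, and choosing $t_0 = t/(128\pi^2 n)$ in place of \eqref{3.19} the arithmetic of Proposition~\ref{P3.4} goes through (since $1/n\leq 1$), producing $\abs{\ti I\setminus\Gamma_{t_0}}\leq \abs{I}/(2n)$ and then $\abs{\Gamma_{t_0}\cap\fre\cap I^\sharp}\geq \abs{I}/(2n)$ as in Proposition~\ref{P3.5}. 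The Vitali covering step and the closing computation via \eqref{1.8} are unchanged, producing a constant of order $n^2$.

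Given this analog, setting $\nu_k=\mu\restriction A_k$ and following the end of the proof of Theorem~\ref{T1.1},
\begin{equation*}
\mu(A_k) \leq C_1(n)\,\liminf_{t\to\infty}\, t\,\abs{\{x\in\fre : \abs{H_{\nu_k}(x)}\geq t\}} \leq 2CC_1(n)\,\mu(A_\infty\setminus A_k),
\end{equation*}
the second inequality coming from $H_{\nu_k}=H_\mu-H_{\mu-\nu_k}$, the hypothesis $H_\mu\restriction\fre\in L^1_{w;0}$, \eqref{1.7} with $\theta=\tfrac12$, and \eqref{1.4}. Letting $k\to\infty$ gives $\mu(A_\infty)=0$, hence $\mu=0$ and so $\mu_\s=0$. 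I expect the only real obstacle to be bookkeeping: verifying that the threshold past which $a<1/n$ does not interfere with $\liminf_{t\to\infty}$, and that the arithmetic of Proposition~\ref{P3.4} survives the substitution $\delta\mapsto 1/n$ for every $n\geq 1$; both are straightforward.
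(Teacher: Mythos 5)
Your proof is correct, but it takes a more explicit route than the paper's own one-line argument.  The paper simply observes that $\fre_n$ is closed and then "applies Theorem~\ref{T1.1} to $d\mu$ and compact homogeneous sets $\fre_n\cap[-N,N]$."  That shortcut requires $\fre_n\cap[-N,N]$ to actually be homogeneous, which is not verified (and is not obvious): the density inequality $|\fre\cap(x-a,x+a)|\geq 2a/n$ built into the definition of $\fre_n$ controls the density of $\fre$, not of $\fre_n$ itself, so there is no direct reason for $\fre_n$ to satisfy \eqref{1.14}.  Your proposal sidesteps this entirely by noticing the key structural fact: in the proof of Theorem~\ref{T1.4}, homogeneity of the ambient set is used \emph{only} in Proposition~\ref{P3.5}, through the density bound at a point $x_0$ which, in the application inside the proof of Theorem~\ref{T1.4}, is an endpoint of a maximal interval and therefore lies in $A$.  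Arranging $A\subset\fre_n$, that bound holds with $\delta$ replaced by $1/n$ as soon as $a<1/n$, which Loomis's inequality supplies for $t$ large.  The arithmetic of Proposition~\ref{P3.4} with $t_0=t/(128\pi^2 n)$ indeed closes (the final inequality reduces to $4\pi<15$), and the Vitali step and the passage to the $\liminf$ via \eqref{1.8} are unchanged, giving a constant of order $n^2$ in place of $\delta^{-2}$.  Two tiny points worth recording: only one endpoint of each maximal interval is forced into $A$ (not both), but that is all you need; and after invoking Proposition~\ref{P2.1} one should note $\mu_\s\ll\mu$, so $\mu_\s$ is still carried by $\fre_n$ and the inner-regularity step really does produce closed $A_k\subset\fre_n$.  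Net effect: your version proves the same statement without relying on any unstated homogeneity of $\fre_n$, and in that sense it is the more careful argument.
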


\begin{proof} We begin by noting that $\fre_n$ is closed, for if $x_m\to x$ and $\abs{(x_m-a,x_m+a)\cap\fre}
\geq\f{2a}{n}$, then for all $m$,
\begin{equation} \lb{5.2}
\abs{(x-a,x+a)\cap\fre} \geq \f{2a}{n} - 2\abs{x-x_m}
\end{equation}
so $x\in\fre_n$. Applying Theorem~\ref{T1.1} to $d\mu$ and compact homogeneous
sets $\fre_n\cap[-N,N]$ for all $N\geq 1$, we get the result.
\end{proof}

Because $\fre$ is not closed, we cannot use Propositions~\ref{P2.2} and \ref{P2.3} to restrict to $\fre_m$.
Instead we need:

\begin{proposition}\lb{P5.2} Let $\mu$ and $\nu$ be two measures on $\bbR$ whose singular parts are mutually
singular. Then for all $c>0$,
\begin{equation} \lb{5.3}
t\abs{\{x\mid\abs{H_\mu(x)} \geq t\} \cap \{x\mid \abs{H_\nu(x)} \geq ct\}} \to 0
\end{equation}
as $t\to\infty$.
\end{proposition}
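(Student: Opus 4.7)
The plan is to reduce to the case where $\mu$ and $\nu$ are both purely singular and mutually singular, and then exploit the elementary bound $\abs{H_\rho(x)} \leq \norm{\rho}/(\pi\dist(x,K))$ for any measure $\rho$ supported on a compact set $K$.

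First I would decompose $\mu=\mu_\ac+\mu_\s$ and $\nu=\nu_\ac+\nu_\s$. By \eqref{1.6}, $H_{\mu_\ac}$ and $H_{\nu_\ac}$ lie in $L_{w;0}^1$. Applying the sub-additive estimate \eqref{1.7} with $\theta=1/2$ to each decomposition, the set $\{\abs{H_\mu}\geq t\}\cap\{\abs{H_\nu}\geq ct\}$ falls inside a union of four pieces, three of which are contained in $\{\abs{H_{\mu_\ac}}\geq t/2\}$ or $\{\abs{H_{\nu_\ac}}\geq ct/2\}$ and hence contribute $\oh(1/t)$. So it suffices to prove \eqref{5.3} assuming $\mu$ and $\nu$ are purely singular with $\mu\perp\nu$.

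Next, fix $\veps>0$ and pick disjoint Borel sets $A_\mu, A_\nu$ with $\mu(\bbR\setminus A_\mu)=0=\nu(\bbR\setminus A_\nu)$. By inner regularity of finite Borel measures, choose compact $K_\mu\subseteq A_\mu$ and $K_\nu\subseteq A_\nu$ with $\mu(A_\mu\setminus K_\mu)<\veps$ and $\nu(A_\nu\setminus K_\nu)<\veps$; set $\mu'=\mu\restriction K_\mu$, $\mu''=\mu-\mu'$ and analogously $\nu'$, $\nu''$, so that $\norm{\mu''},\norm{\nu''}<\veps$. Since $K_\mu$ and $K_\nu$ are disjoint compact subsets of $\bbR$, $d:=\dist(K_\mu,K_\nu)>0$. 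The crucial observation is that $\{\abs{H_{\mu'}}\geq s\}$ is contained in the $(\norm{\mu'}/\pi s)$-neighborhood of $K_\mu$ and $\{\abs{H_{\nu'}}\geq cs\}$ in the $(\norm{\nu'}/\pi cs)$-neighborhood of $K_\nu$; for $s$ large enough both radii are less than $d/2$, so these two level sets are disjoint.

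Finally, apply \eqref{1.7} with $\theta=1/2$ once more to decompose $\{\abs{H_\mu}\geq t\}\cap\{\abs{H_\nu}\geq ct\}$ according to $\mu=\mu'+\mu''$ and $\nu=\nu'+\nu''$. Four pieces appear: the $(\mu',\nu')$ piece is empty for $t$ large by the previous step, while each of the remaining three involves either $\mu''$ or $\nu''$, so by Loomis's inequality \eqref{1.4} its Lebesgue measure times $t$ is bounded by a constant (depending only on $c$ and the Loomis constant $C$) times $\veps$. Taking $\limsup_{t\to\infty}$ and then letting $\veps\downarrow 0$ yields \eqref{5.3}. The whole argument is rather soft; the mildest point of care is arranging disjointness of the compact supports of $\mu'$ and $\nu'$, which is immediate from mutual singularity of $\mu$ and $\nu$ together with inner regularity.
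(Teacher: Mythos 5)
Your proof is correct, and it takes a genuinely different and more elementary route than the paper's. The paper sketches the result by invoking Poltoratski's weak-$*$ convergence theorem $\tfrac12 \pi t \chi_{\{\abs{H_\mu}\geq t\}}\,dx \to d\mu_\s$ (equation \eqref{5.4}, which rests on averaging over rank-one perturbations), then combines lower semicontinuity of the total variation norm under weak-$*$ convergence with the identity \eqref{5.9} relating symmetric differences to intersections, first for $c=1$ and then by monotonicity and symmetry for general $c$. Your argument bypasses \eqref{5.4} entirely: after the same soft reduction via \eqref{1.6} and \eqref{1.7} to mutually singular purely singular measures, you use inner regularity to peel off all but $\veps$ of the masses onto \emph{disjoint compact} sets $K_\mu, K_\nu$, observe that $\abs{H_{\rho}(x)} \leq \norm{\rho}/(\pi\dist(x,K))$ forces the high level sets of $H_{\mu'}$ and $H_{\nu'}$ into shrinking neighborhoods of $K_\mu$ and $K_\nu$ (hence disjoint for large $t$ since $\dist(K_\mu,K_\nu)>0$), and control the three remainder pieces by Loomis \eqref{1.4}, each contributing $O(\veps)$ to $\limsup t\abs{\cdot}$. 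The only mild caveats, both of which you implicitly handle: one should choose $A_\mu, A_\nu$ disjoint (available precisely because $\mu_\s \perp \nu_\s$), and the thresholds after splitting become $t/2$ and $ct/2$, which changes nothing in the limit. Your proof is self-contained and real-variable, at the cost of not producing the stronger convergence \eqref{5.4} as a byproduct, which the authors also use to justify \eqref{1.8}.
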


\begin{remark}  This result is essentially in Poltoratski \cite{Pol} (see the last set out formula in the proof
of Theorem~2 in that paper), so we only sketch the proof.
\end{remark}

\begin{proof}[Sketch] Suppose first that $c=1$. We begin with what is essentially Theorem~1 of \cite{Pol},
that for any positive measure $\mu$, as $t\to\infty$,
\begin{equation} \lb{5.4}
\tfrac12\, \pi t \chi_{\{x\mid \abs{H_\mu(x)}\geq t\}}\, dx \overset{w}{\longrightarrow} d\mu_\s
\end{equation}
in the weak-$*$ topology. By \eqref{1.6} and \eqref{1.7}, it suffices to prove this for $\mu=\mu_\s$. In that case,
if $\mu^{(\alpha)}$ is the measure with Stieltjes transform,
\begin{equation} \lb{5.5}
F_\alpha(z) = \f{F(z)}{1+\alpha F(z)}
\end{equation}
then (\cite{S250,Pol})
\[
\int_{-(\pi t)^{-1}}^{(\pi t)^{-1}} (d\mu_\alpha(x))\, d\alpha =
\chi_{\{x\mid\abs{H_\mu(x)}\geq t\}} \, dx
\]
so \eqref{5.4} follows from $d\mu_\alpha \overset{w}{\to} d\mu$ as
$\abs{\alpha}\to 0$.

By \eqref{1.8}, if $\mu^{(t)}$ is the measure on the left side of \eqref{5.4},  then
\begin{equation} \lb{5.6}
\norm{\mu^{(t)}}\to \norm{\mu_\s}
\end{equation}

By \eqref{5.4},
\begin{equation} \lb{5.7}
\mu^{(t)} -\nu^{(t)} \overset{w}{\longrightarrow} \mu_\s - \nu_\s
\end{equation}
so
\begin{equation} \lb{5.8}
\liminf\, \norm{\mu^{(t)} - \nu^{(t)}} \geq \norm{\mu_\s -\nu_\s} = \norm{\mu_s} + \norm{\nu_s}
\end{equation}
by the assumed mutual singularity.

But
\begin{equation} \lb{5.9}
\norm{\mu^{(t)} -\nu^{(t)}} = \norm{\mu^{(t)}} + \norm{\nu^{(t)}} - \pi(\text{lhs of \eqref{5.3}})
\end{equation}
\eqref{5.6} and \eqref{5.8} then imply \eqref{5.3} for $c=1$.

This implies the result for $c\geq 1$ and then, by symmetry, for all $c>0$.
\end{proof}

\begin{proof}[Proof of Theorem~\ref{T1.8}] For each $n$, define
\begin{equation} \lb{5.10}
\mu_n = \mu\restriction\fre_n \qquad \nu_n = \mu-\mu_n
\end{equation}
By \eqref{1.7},
\begin{equation} \lb{5.11}
\begin{split}
\abs{\{x\in\fre \mid\abs{H_{\mu_n}(x)} \geq 2t\}} & \leq \abs{\{x\in\fre\mid\abs{H_\mu(x)} \geq t\}} \\
& \quad + \abs{\{x\mid \abs{H_{\mu_n}(x)} \geq 2t,\,
\abs{H_{\nu_n}(x)}\geq t\}}
\end{split}
\end{equation}

By the hypothesis, the first term on the right is $o(1/t)$ and, by
Proposition~\ref{P5.2}, the second is $o(1/t)$. Thus,
$H_{\mu_n}\restriction\fre\in L_{w;0}^1$, and it follows from
Proposition~\ref{P5.1} that $(\mu_n)_\s =0$, that is, $\mu_\s(\fre_n)=0$.

Since
\[
\bigcup_n \fre_n = \bigl\{x\in\fre \bigm| \liminf_{a\downarrow 0}\, (2a)^{-1} \abs{\fre\cap (x-a,x+a)} >0\bigr\}
\]
we have \eqref{1.22x}.
\end{proof}

\bigskip

\end{document}